\theoremstyle{plain}
\newtheorem{theorem}{Theorem}[section]
\newtheorem*{theorem*}{Theorem}
\newtheorem{proposition}[theorem]{Proposition}
\newtheorem{corollary}[theorem]{Corollary}
\newtheorem{lemma}[theorem]{Lemma}
\theoremstyle{definition}
\newtheorem{remark}[theorem]{Remark}
\newtheorem{example}[theorem]{Example}
\newcommand{\enm}[1]{\ensuremath{#1}}          %
\newcommand{\cal}[1]{\mathcal{#1}}
\newcommand{\ZZ}{\enm{\mathbb{Z}}}
\newcommand{\PP}{\enm{\mathbb{P}}}
\newcommand{\Ii}{\enm{\cal{I}}}
\newcommand{\Oo}{\enm{\cal{O}}}
\newcommand{\Ol}{\mathcal{O}}
\newcommand{\fq}{\F_q}
\renewcommand{\phi}{\varphi}
\renewcommand{\theta}{\vartheta}
\renewcommand{\epsilon}{\varepsilon}
\renewcommand{\to}[1][]{\xrightarrow{\ #1\ }}
\newcommand{\old}[1]{}
\newcommand{\numberset}{\mathbb}
\newcommand{\K}{\numberset{K}}
\newcommand{\F}{\numberset{F}}
\begin{document}
	
	\title{On the weights of dual codes arising from the GK curve}

\author{Edoardo Ballico}
\address{Universit\`a di Trento, 38123 Povo (TN), Italy}
\curraddr{}
\email{edoardo.ballico@unitn.it}
\thanks{}

\author{Matteo Bonini}
\address{Universit\`a di Trento, 38123 Povo (TN), Italy}
\curraddr{}
\email{matteo.bonini@unitn.it}
\thanks{}

\subjclass[]{}

\keywords{}

\date{}	
	
	\maketitle
	\begin{abstract}
		In this paper we investigate some dual algebraic-geometric codes associated with the Giulietti-Korchm\'aros maximal curve. We compute the minimum distance and the minimum weight codewords of such codes and we investigate the generalized hamming weights of such codes.
	\end{abstract}
	
	{\bf Keywords:} Giulietti-Korchm\'aros curve - algebraic-geometric codes - weight distribution
	
	{\bf MSC Codes:} 14G50 - 11T71 - 94B27
	
	\section{Introduction}
	
	Let $\mathcal{X}$ be an algebraic curve defined over the finite field $\mathbb{F}_q$ of order $q$. 
	We recall that a curve $\mathcal{X}$ is called $\F_q$- maximal if its number of rational points over $\fq$ attains the Hasse-Weil upper bound $$|\mathcal{X}(\F_q)|=q+1+2g(\mathcal{X})q^{1/2},$$ where $g(\mathcal{X})$ is the genus of $\mathcal{X}$.
	
	Since codes with good parameters can be constructed from these curves, many authors studied their properties, see \cite{BMZ2017,BS2019,Goppa81,Goppa82,Hansen1987,MPS2016,Matthews2004}.
	Most of the known examples have been shown to be subcovers of the Hermitian curve $\mathcal{H}$, which is defined over $\F_{q^2}$ by the equation $Y^{q+1}=X^q+X.$
	This led to the question whether every maximal curve is a subcover of the Hermitian curve or not. This question has a negative answer: in \cite{GK2009}, Giulietti and Korchm\'aros introduced an infinity family of curves $\mathcal{C}'$, the so called GK curve, which is maximal over $\mathbb{F}_{q^6}$. 
	
	Codes from the GK curve have been widely investigate, see for example \cite{BB,BMZ2017,CT2016,FG2010}
	
	In most cases, the weight distribution of a given code is hard to be computed. Even the problem of computing codewords of minimum weight can be a difficult  task apart from specific cases. In \cite{BB}, following the approach of \cite{Couv2012}, the authors compute the number of minimum weight codewords of certain dual AG codes arising from the GK curve. For this purpose, they provide a useful algebraic-geometric description for codewords with a given weight which belong to a fixed affine-variety code. These techniques are widely used in literature, see \cite{b,b+,b++,b1}.
	
	In this paper we investigate, using algebraic geometry techniques, codes arising from the GK maximal curve and we give tools to compute the number of minimum weight codewords of such codes. We also investigate some different construction of codes deriving from the ones in \cite{BB} and we study the generalized hamming weights of some codes arising from the GK curve, which are another important pattern for a linear code.

	\section{Preliminaries}
	
	We recall the following results (see \cite[Theorem 1]{b+}).
	
	\begin{lemma}\label{a2}
		Fix integers $r\ge 2$, $m>0$ and $e>0$. Let $Z\subset \PP^r$ be a zero-dimensional scheme such that $\deg (Z)\le 3m+r-3$. If $r\ge 3$ assume that $Z$ spans $\PP^r$.
		We have $h^1(\Ii _Z(m)) \ge e$ if and only if there is $W\subseteq Z$ occurring in this list:
		\begin{itemize}
			\item[(a)] $\deg (W) = m+1+e$ and $W$ is contained in a line;
			\item[(b)] $\deg (W) = 2m+1+e$ and $W$ is contained in a reduced plane conic;
			\item[({c})] $r\ge 3$, $e\ge 2$, and there are an integer $f\in \{1,\dots ,e-1\}$ and lines $L_1, L_2$, such that $L_1\cap
			L_2=\emptyset$, $\deg (L_1\cap Z) = m+1+f$ and $\deg (L_2\cap Z)=m+1+e-f$.
		\end{itemize}
	\end{lemma}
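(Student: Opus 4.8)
The statement is a biconditional, so my plan splits into the (easy) sufficiency and the (substantial) necessity. For sufficiency I would treat each configuration by restricting to the linear variety carrying the subscheme and exploiting surjectivity of the restriction of degree-$m$ forms. If $W\subseteq Z$ then $\Ii_Z/\Ii_W$ has finite length, so the long exact sequence yields a surjection $H^1(\Ii_Z(m))\to H^1(\Ii_W(m))$ and hence $h^1(\Ii_Z(m))\ge h^1(\Ii_W(m))$; it therefore suffices to bound $h^1(\Ii_W(m))$ from below. In case (a), $W$ lies on a line $L\cong\PP^1$ where $\mathcal{O}(m)$ has $h^0=m+1$ and the restriction $H^0(\mathcal{O}_{\PP^r}(m))\to H^0(\mathcal{O}_L(m))$ is onto, so $h^1(\Ii_W(m))=\deg(W)-(m+1)=e$. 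Case (b) is identical with $L$ replaced by the reduced conic $C$, for which $h^0(\mathcal{O}_C(m))=2m+1$ and restriction from $\PP^r$ stays surjective (factor through the spanned plane, where $h^1(\mathcal{O}(m-2))=0$). In case (c) the disjointness $L_1\cap L_2=\emptyset$ gives $\mathcal{O}_W=\mathcal{O}_{Z\cap L_1}\oplus\mathcal{O}_{Z\cap L_2}$ for $W=(Z\cap L_1)\cup(Z\cap L_2)$, so the two failures add and $h^1(\Ii_W(m))\ge f+(e-f)=e$.

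For necessity I would argue by induction on $m$, with a nested induction on $r$, the engine being the residual exact sequence
\[
0 \to \Ii_{\op{Res}_H Z}(m-1) \to \Ii_Z(m) \to \Ii_{Z\cap H,\,H}(m) \to 0
\]
for a hyperplane $H$ (a line, when $r=2$) chosen so that $\deg(Z\cap H)$ is maximal. This yields $h^1(\Ii_Z(m))\le h^1(\Ii_{\op{Res}_H Z}(m-1))+h^1(H,\Ii_{Z\cap H,H}(m))$, with $Z\cap H\subset H\cong\PP^{r-1}$ and $\deg(\op{Res}_H Z)=\deg Z-\deg(Z\cap H)$ strictly smaller. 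If the trace $Z\cap H$ already carries $h^1\ge e$, I apply the inductive hypothesis on $H$; otherwise the residual accounts for the excess and I recurse in degree $m-1$, then \emph{reassemble} the structured subscheme by adding back the points on $H$ (a line concentration for $\op{Res}_H Z$ at level $m-1$ combining with the trace to produce one at level $m$, and similarly for the conic and skew-line cases). The degree hypothesis $\deg Z\le 3m+r-3$ is exactly what keeps the inductive hypotheses valid after each residuation (it forces $\deg(Z\cap H)\ge 3$ for the optimal $H$) and what prevents the excess $h^1$ from hiding on a curve of degree $\ge 3$.

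The heart of the matter, and the step I expect to be hardest, is the trichotomy itself: proving that the forced concentration must have one of the three listed shapes and no other. Concretely, assuming no line meets $Z$ in degree $\ge m+1+e$ (so (a) fails), I must show that the remaining $h^1$ is either explained by a single reduced conic (case (b)) or splits as $f+(e-f)$ across two \emph{disjoint} lines (case (c)), and that these are mutually exhaustive. Disentangling (b) from (c) is delicate because both feature degree-$\approx 2m$ substructures; the point is that a concentration of degree $\ge 2m+2$ can only lie on an irreducible conic, a reducible conic (two incident lines), or a pair of skew lines, and the spanning hypothesis for $r\ge 3$ together with the sharp bound $3m+r-3$ forces the clean split between ``one conic'' and ``two skew lines.'' Verifying the base cases (small $m$, or $\deg Z$ close to the bound) and checking that the extracted $W$ has \emph{precisely} the prescribed degree — rather than merely at least it — is where the most careful bookkeeping will be required.
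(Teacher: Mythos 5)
Your proposal cannot be measured against an in-paper argument, because the paper never proves Lemma \ref{a2}: it is recalled verbatim from \cite[Theorem 1]{b+}. Judged on its own merits, your sufficiency direction is essentially correct and standard (one typo: since $W\subseteq Z$ we have $\Ii_Z\subseteq \Ii_W$, so the finite-length quotient is $\Ii_W/\Ii_Z$, not $\Ii_Z/\Ii_W$). Bounding the number of conditions imposed by $W$ via $h^1(\Ii_L(m))=0$, respectively $h^1(\Ii_C(m))=0$, respectively additivity over two disjoint lines, does give $h^1(\Ii_W(m))\ge e$ in cases (a), (b), (c), and the surjection $H^1(\Ii_Z(m))\to H^1(\Ii_W(m))$ finishes that half.

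The necessity direction, however, is a plan rather than a proof, and the gap sits exactly where you flag it. Two concrete problems. First, the reassembly step fails as stated: applying the inductive hypothesis to $\op{Res}_H Z$ in degree $m-1$ yields a line $L$ with $\deg\bigl(L\cap \op{Res}_H Z\bigr)\ge m+e'$ for some $e'$, but case (a) for $Z$ in degree $m$ requires $\deg(L\cap Z)\ge m+1+e$; the missing degree must come from $(Z\cap H)\cap L$, and choosing $H$ to maximize the trace guarantees neither that $L$ meets the trace nor any relation between $e'$ and $e$ --- the quantity $e$ can split between the trace term $h^1\bigl(H,\Ii_{Z\cap H,H}(m)\bigr)$ and the residual term, and tracking that splitting through the induction is precisely the bookkeeping you defer. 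Second, the trichotomy itself --- that an excess $h^1$ not explained by a line must be explained by a reduced conic or by two disjoint lines and by nothing else --- is the actual content of the theorem, and you explicitly leave it as ``the step I expect to be hardest.'' It genuinely needs the degree bound and the spanning hypothesis: under the only slightly weaker bound of Lemma \ref{x1} of this same paper, degree-$3$ configurations (plane cubics, connected space cubics) do occur, so your assertion that $3m+r-3$ ``prevents the excess $h^1$ from hiding on a curve of degree $\ge 3$'' is a claim requiring proof, not a proof; note for instance that for $r=3$ a planar complete intersection of a cubic and a degree-$m$ curve has degree exactly $3m\le 3m+r-3$ and positive $h^1$, and is excluded only by invoking the spanning hypothesis, which your outline never uses. (The parenthetical claim that the bound ``forces $\deg(Z\cap H)\ge 3$ for the optimal $H$'' is also unsubstantiated and plays no role.) As it stands, the only-if direction would require carrying out the full case analysis of \cite{b+} or of the closely related \cite{b}.
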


	\begin{lemma}\label{Lemma_cod}
		Let $\F$ be any field and let $\mathbb{P}^n$ denote the projective space of dimension $n$ on $\F$.
		Let $C \subseteq \mathbb{P}^r$ be a smooth plane curve which is a complete intersection. Fix an integer $d>0$, a
		zero-dimensional scheme $E\subseteq C$ and a finite subset $B\subseteq C$ such that $B\cap E_{red}=\emptyset$\footnote{Here $E_{red}$ denotes the reduction of the scheme $E$.}. Denote by $\mathcal{C}$ the code obtained evaluating the vector space $H^0(C,\mathcal{O}_C(d)(-E))$ at the points of $B$.
		Set $c= \deg (C)$,
		$n=|B|$ and assume $|B| > dc -\deg (E)$. The following facts hold.
		\begin{enumerate}
			\item The code $\mathcal {C}^{\bot}$ has length $n$ and dimension $k= h^0(C,\mathcal
			{O}_C(d))-\deg (E) +h^1(\mathbb {P}^n,\mathcal {I}_E(d))$.
			
			\item The minimum distance
			of $\mathcal {C}^{\bot}$ is the minimal cardinality, say $z$, of a subset of $S
			\subseteq B$
			such that
			$$h^1(\mathbb {P}^n,\mathcal {I}_{S\cup E}(d)) >h^1(\mathbb {P}^n,\mathcal
			{I}_E(d)).$$
			
			\item A codeword of $\mathcal{C}^{\bot}$
			has weight $z$ if and only if it is supported by a subset $S\subseteq B$ such
			that
			\begin{enumerate}
				\item $|S| = z$,
				\item $h^1(\mathbb {P}^n,\mathcal {I}_{E\cup S}(d)) >h^1(\mathbb
				{P}^n,\mathcal {I}_E(d))$,
				\item $h^1(\mathbb {P}^n,\mathcal {I}_{E\cup S}(d))
				>h^1(\mathbb {P}^n,\mathcal {I}_{E\cup S'}(d))$ for any $S'\varsubsetneq S$.
			\end{enumerate}
		\end{enumerate}
	\end{lemma}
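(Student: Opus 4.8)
The statement is a standard dictionary translating the weights of a dual AG code into the cohomology of ideal sheaves of point-schemes on the curve $C$. I would prove all three parts simultaneously from one linear-algebra identity, so let me first fix the framework. Write $V = H^0(C,\Oo_C(d)(-E))$, so $\Cc$ is the image of the evaluation map $\mathrm{ev}_B\colon V \to \F^n$, $v \mapsto (v(P))_{P\in B}$. The orthogonal complement $\Cc^\perp$ consists of those vectors $\lambda=(\lambda_P)_{P\in B}\in\F^n$ with $\sum_{P\in B}\lambda_P v(P)=0$ for all $v\in V$. The whole proof rests on identifying $\Cc^\perp$ with the kernel of the transpose of $\mathrm{ev}_B$, and then reinterpreting that kernel cohomologically.

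\textbf{Part (1): length and dimension.} The length is $n=|B|$ by construction. For the dimension of $\Cc^\perp$ it suffices to compute $\dim_\F\Cc$, since $\dim\Cc^\perp = n-\dim\Cc$. The plan is to show $\mathrm{ev}_B$ is injective on $V$ under the hypothesis $|B|>dc-\deg(E)$. Indeed, a nonzero $v\in V$ vanishes on a divisor of degree at most $dc-\deg(E)$ (the sections of $\Oo_C(d)(-E)$ already vanish along $E$, and $\Oo_C(d)$ has degree $dc$ on the complete intersection curve $C$ of degree $c$), so it cannot vanish at all $|B|$ points of $B$; hence $\dim\Cc=\dim V$. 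Now I would compute $\dim V$ from the exact sequence $0\to\Oo_C(d)(-E)\to\Oo_C(d)\to\Oo_E(d)\to 0$: taking cohomology gives $\dim V = h^0(C,\Oo_C(d)) - \deg(E) + h^1(C,\Oo_C(d)(-E))$ once one checks $h^0(\Oo_E(d))=\deg(E)$ and that the connecting map's cokernel is measured by $h^1$. The remaining step is to replace $h^1(C,\Oo_C(d)(-E))$ by $h^1(\PP^n,\Ii_E(d))$; since $E\subseteq C$ and $C$ is a complete intersection, I would use the standard comparison (via the Koszul/ideal sheaf sequence on $\PP^r$ restricted appropriately) identifying these two $h^1$ values, yielding $\dim\Cc = h^0(\Oo_C(d))-\deg(E)+h^1(\Ii_E(d))$, and then $k=n-\dim\Cc$ as claimed—here I am reading the statement's formula for $k$ as the dimension of the dual code.

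\textbf{Parts (2) and (3): weights via jumps in $h^1$.} The key is a support characterization: a codeword $\lambda\in\Cc^\perp$ supported exactly on $S\subseteq B$ exists if and only if there is a linear dependence among the evaluation functionals $\{\mathrm{ev}_P\}_{P\in S}$ on $V$ that is nonzero in every coordinate of $S$. I would translate the existence of a dependence supported on $S$ into the cohomological jump $h^1(\PP^n,\Ii_{E\cup S}(d))>h^1(\PP^n,\Ii_E(d))$: adding the reduced points $S$ to $E$ fails to impose independent conditions on degree-$d$ forms modulo $C$ precisely when such a dependence exists, which is again read off from the sequence $0\to\Ii_{E\cup S}(d)\to\Ii_E(d)\to\Oo_S(d)\to 0$ and the fact that the corank of $H^0(\Ii_E(d))\to H^0(\Oo_S(d))$ equals the number of independent dependences. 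This gives (2): the minimum distance is the least $|S|$ for which the jump occurs. For (3), the minimality condition (c) encodes that $\lambda$ is supported on \emph{all} of $S$ and not on any proper subset: the jump must be \emph{new} at $S$, \ie strictly larger than the jump for every $S'\subsetneq S$, which is exactly the statement that the dependence has full support. The main obstacle I anticipate is the careful bookkeeping in the comparison $h^1(C,\cdot)\leftrightarrow h^1(\PP^n,\cdot)$ and in matching ``corank of the restriction map'' with ``number of independent codewords,'' since one must ensure the complete-intersection hypothesis on $C$ makes the relevant $H^1(\PP^n,\Oo(d-c\cdots))$ terms vanish so that the curve-level and ambient-level cohomology jumps genuinely agree.
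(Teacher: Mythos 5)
The paper never proves this lemma: it is recalled, together with Lemma \ref{a2}, from \cite[Theorem 1]{b+}, so your proposal can only be measured against the standard argument from that literature --- which is indeed the one you outline. Your core mechanism is the right one: the injectivity of $\op{ev}_B$ follows from the degree count $dc-\deg(E)<|B|$ exactly as you say, and combining the sequence $0\to\Ii_{E\cup S}(d)\to\Ii_E(d)\to\Oo_S(d)\to0$ with the surjectivity of $H^0(\PP^r,\Ii_E(d))\to H^0(C,\Oo_C(d)(-E))$ (this is precisely where $h^1(\Ii_C(t))=0$ for complete intersections enters, the point you correctly flag as the ``main obstacle'') shows that the space of codewords of $\Cc^{\bot}$ supported inside $S$ has dimension exactly $h^1(\Ii_{E\cup S}(d))-h^1(\Ii_E(d))$; parts (2) and (3) follow from this identity.

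That said, two of your intermediate steps are wrong as written, and they only give the right answer because the errors cancel. From $0\to\Oo_C(d)(-E)\to\Oo_C(d)\to\Oo_E\to0$ the long exact sequence gives $\dim V=h^0(\Oo_C(d))-\deg(E)+h^1(\Oo_C(d)(-E))-h^1(\Oo_C(d))$; you dropped the last term, which does not vanish in general (for the GK curve the genus is large, so $h^1(\Oo_C(d))>0$ for small $d$). You then ``replace'' $h^1(C,\Oo_C(d)(-E))$ by $h^1(\PP^r,\Ii_E(d))$, but these are not equal: the correct comparison, stated in Section 3 of the paper, is $h^1(C,\Oo_C(d)(-E))=h^1(\Oo_C(d))+h^1(\PP^r,\Ii_E(d))$. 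The missing term and the false identification compensate exactly, so your final formula is correct, but neither step survives scrutiny on its own. Relatedly, your reconciliation of part (1) is muddled: your own computation shows that $h^0(\Oo_C(d))-\deg(E)+h^1(\Ii_E(d))$ equals $\dim\Cc$, so it cannot simultaneously be $\dim\Cc^{\bot}=n-\dim\Cc$; the attribution of this formula to $\Cc^{\bot}$ is a slip in the statement itself (it computes the dimension of $\Cc$), and you should say so explicitly instead of asserting ``$k=n-\dim\Cc$ as claimed.'' Finally, in part (3), over a finite field the strict jump at every proper subset $S'\varsubsetneq S$ does not by itself produce a dependence that is nonzero in \emph{every} coordinate of $S$; what saves the argument is minimality, since $|S|=z$ is the minimum distance, so any nonzero codeword supported in $S$ automatically has support equal to all of $S$ --- that one-line argument should be made explicit.
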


	A zero-dimensional scheme $Z\subset \mathbb {P}^r$ is said to be {\it curvilinear} if at each $P\in Z_{red}$ the Zariski tangent space of $Z$ has dimension $\le 1$. A zero-dimensional scheme is contained in a smooth curve (easy). A zero-dimensional scheme is curvilinear if and only if it has finitely many subschemes (for the ``~only if~'' part use that it is contained in a smooth curve,
	for the ``~if~'' part use that a non-curvilinear subscheme has infinitely many
	subschemes with degree $2$). In this note we point out the following partial extension of \cite{b}, Theorem 1,
	to the case of non-reduced, but curvilinear subschemes.
		
	We recall the following results (\cite[Theorem 1]{b++}).
		
	\begin{theorem}\label{z1}
		Fix an integer $m \ge 3$. Let $Z\subset \mathbb {P}^r$, $r\ge 3$, be a curvilinear zero-dimensional scheme spanning $\mathbb {P}^r$. If $r=3$, then assume $\deg (Z)< 3m$.
		If $r\ge 4$, then assume $\deg (Z) \le 4m+r-5$ and $\deg (Z\cap M)<3m$ for all $3$-dimensional
		linear subspaces $M\subset \mathbb {P}^r$. We have $h^1(\mathcal {I}_Z(m)) >0$ if and only if either
		there is a line $D$ with $\deg (D\cap Z)\ge m+2$ or there is a conic $D'$ with $\deg (D'\cap Z)\ge 2m+2$.
	\end{theorem}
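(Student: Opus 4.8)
The plan is to prove the two implications separately, the ``if'' part being elementary and the ``only if'' part requiring a reduction to a minimal subscheme together with Lemma~\ref{a2}.

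First I would dispose of the ``if'' direction. Suppose a line $D$ satisfies $\deg(D\cap Z)\ge m+2$, and set $Z'=D\cap Z\subseteq Z$. Since $Z'$ lies on $D\cong\PP^1$ and the restriction $H^0(\PP^r,\Oo(m))\to H^0(D,\Oo_D(m))$ is surjective, $Z'$ imposes on $|\Oo_{\PP^r}(m)|$ exactly the conditions it imposes on $|\Oo_D(m)|$, whence $h^1(\Ii_{Z'}(m))\ge \deg(Z')-(m+1)\ge 1$. The inclusion $Z'\subseteq Z$ gives $0\to\Ii_Z\to\Ii_{Z'}\to\Oo_{Z/Z'}\to 0$ with $\Oo_{Z/Z'}$ supported in dimension $0$, so $H^1(\Ii_Z(m))\twoheadrightarrow H^1(\Ii_{Z'}(m))$ and therefore $h^1(\Ii_Z(m))>0$. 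The conic case is identical, using that a smooth conic $D'$ is projectively normal and $\Oo_{\PP^r}(m)|_{D'}$ has degree $2m$, so $\deg(D'\cap Z)\ge 2m+2$ forces $h^1>0$.

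For the ``only if'' direction, assume $h^1(\Ii_Z(m))>0$. Because $Z$ is curvilinear it has only finitely many subschemes, so I may choose $W\subseteq Z$ minimal with $h^1(\Ii_W(m))>0$; removing one length-one piece and using the associated exact sequence shows $h^1(\Ii_W(m))=1$ and that every proper subscheme of $W$ imposes independent conditions. Let $s=\dim\langle W\rangle$. The key observation is that if $\deg(W)\le 3m+s-3$, then Lemma~\ref{a2} applied to $W$ inside $\langle W\rangle\cong\PP^s$ with $e=1$ produces a subscheme $W'\subseteq W$ lying on a line ($\deg W'=m+2$) or on a conic ($\deg W'=2m+2$); alternative (c) of Lemma~\ref{a2} is vacuous for $e=1$. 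By minimality $W'=W$, so $W$ itself lies on a line or a conic, forcing $s\le 2$ and giving the desired configuration in $Z$. When $s\le 3$ the hypothesis of this reduction is immediate: if $s=1$ then $W$ spans a line and minimality forces $\deg W=m+2$; if $s\in\{2,3\}$, choosing a $3$-dimensional subspace $M\supseteq\langle W\rangle$ (possible as $s\le 3\le r$) gives $\deg W=\deg(W\cap M)\le\deg(Z\cap M)<3m$, hence $\deg W\le 3m-1\le 3m+s-3$. For $s=3$ this is a contradiction, since a scheme on a line or a conic cannot span $\PP^3$; thus $s\ne 3$.

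The genuinely new case, and the main obstacle, is $s\ge 4$, where no single $3$-dimensional subspace contains $\langle W\rangle$ and only the global bounds $\deg(Z)\le 4m+r-5$ and $\deg(Z\cap M)<3m$ are available. Here I would argue by induction on $r$, the base case $r=3$ being covered by Lemma~\ref{a2}. If $s<r$ one applies the theorem in $\PP^s$ to $W$, whose $3$-dimensional slices inherit the bound $<3m$, the one point to verify being the degree inequality $\deg(W)\le 4m+s-5$; the resulting line or conic then contradicts minimality as above. If $s=r$ one passes to a hyperplane section through the residual sequence $0\to\Ii_{\op{Res}_H W}(m-1)\to\Ii_W(m)\to\Ii_{W\cap H,H}(m)\to 0$, choosing $H$ so as to absorb as much of $W$ as possible and feeding the two outer terms into the inductive hypothesis in $\PP^{r-1}$ and into a degree-$(m-1)$ estimate. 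In both sub-cases the crux is a Castelnuovo-type bound showing that a minimal curvilinear scheme with $h^1=1$ cannot span a space of dimension $\ge 4$ under the stated constraints; this is exactly where the hypothesis $\deg(Z)\le 4m+r-5$, rather than the weaker $3m+r-3$ of Lemma~\ref{a2}, together with $\deg(Z\cap M)<3m$ on every $3$-space, is needed, and I expect this degree estimate to be the technical heart of the argument.
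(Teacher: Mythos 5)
You should note at the outset that the paper contains no proof of Theorem \ref{z1} to compare against: it is recalled verbatim from \cite[Theorem 1]{b++}, so your attempt must stand on its own. On those terms it is incomplete. The ``if'' direction is correct (though you should not assume the conic is smooth; the same surjectivity argument via $0\to\Oo_{\PP^2}(m-2)\to\Oo_{\PP^2}(m)\to\Oo_{D'}(m)\to 0$ handles any plane conic), and so is the reduction to a minimal subscheme $W\subseteq Z$ with $h^1(\Ii_W(m))=1$, together with the treatment of $s=\dim\langle W\rangle\le 3$ via Lemma \ref{a2} and the $3$-space hypothesis. But for $s\le 3$ the statement is essentially an instance of Lemma \ref{a2}; the entire content of the theorem for $r\ge 4$ is the case $s\ge 4$, and there your text stops being a proof. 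For $s<r$ you yourself flag that the required inequality $\deg(W)\le 4m+s-5$ is unverified, and it does not follow from $\deg(Z)\le 4m+r-5$, since the slack $r-s$ is unbounded. For $s=r$ the hyperplane-residual induction is only named: no choice of $H$, no proof of the vanishing $h^1(\Ii_{\op{Res}_H(W)}(m-1))=0$, no check that the hypotheses (including $m\ge 3$ and the $3$-space bound) descend to the two outer terms of the sequence. Declaring that this estimate is ``the technical heart'' does not discharge it; it is the theorem.

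The gap is not a formality, because the ``Castelnuovo-type bound'' you hope to invoke --- that a minimal curvilinear scheme with $h^1=1$ cannot span a $\PP^s$ with $s\ge 4$ under constraints of this kind --- fails the moment the numerics are relaxed slightly, so it cannot be assumed as a plausible black box. Take $W$ a curvilinear subscheme of degree $4m+2$ of a rational normal quartic curve $C_4\subset\PP^4$. By projective normality of $C_4$ one has $h^1(\Ii_W(m))=h^1(\PP^1,\Oo_{\PP^1}(-2))=1$, every proper subscheme of $W$ imposes independent conditions (so $W$ is minimal), and $W$ spans $\PP^4$; moreover every $3$-plane meets $C_4$, hence $W$, in degree $\le 4<3m$, every line meets $W$ in degree $\le 2$, and every plane conic meets $W$ in degree $\le 3$, so no line or conic as in the conclusion exists. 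This $W$ is excluded only by the total-degree hypothesis, and only barely ($4m+2>4m-1=4m+r-5$ for $r=4$). Any correct proof of the $s\ge 4$ case must therefore run the induction with exact bookkeeping of the bounds $4m+r-5$ and $3m$; that bookkeeping is precisely what your proposal omits.
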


	With minimal modifications of the proof of \cite{b} we get the following result	
	
	\begin{lemma}\label{x1}
		Let $Y\subset \PP^3$ be a smooth and connected projective curve defined over an algebraically closed field. Fix a zero-dimensional scheme $A\subset Y$
		and a finite set $B \subset Y$ such that $A\cap B =\emptyset$. Set $Z:= A\cup B$. Assume $\deg(A) $
		Assume $\deg (A)<3m$, $\deg (Z) \le 4m+2$ and $\deg (Z\cap H)\le 4m-5$ for each plane $H\subseteq \PP^r$
		We have $h^1(\mathcal {I}_Z(m)) >0$ if and only if there is $W\subseteq Z$ as in one of the following cases:
		\begin{itemize}	
			\item[(a)] $\deg (W) =m+2$ and $W$ is contained in a line;
			\item[(b)] $\deg (W)=2m+2$ and $W$ is contained in a plane conic;
			\item[(c)] $\deg(W) = 3m$ and $W$ is the complete intersection of a degree $3$ plane curve and a degree $m$ surface;
			\item[(d)] $\deg (W) \ge 3m+1$ and $W$ is contained in a degree $3$ plane curve;
			\item[(e)] $\deg (W) =3m+2$ and $W$ is contained in a reduced and connected degree $3$ curve spanning $\mathbb {P}^3$.
		\end{itemize}
	\end{lemma}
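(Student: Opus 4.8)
The plan is to reduce the statement about the curve $Y\subset\PP^3$ to the ambient projective-space statement of Theorem~\ref{z1}, which already classifies when $h^1(\Ii_Z(m))>0$ for a curvilinear scheme in $\PP^3$ via the existence of a line $D$ with $\deg(D\cap Z)\ge m+2$ or a conic $D'$ with $\deg(D'\cap Z)\ge 2m+2$. The first thing I would check is that $Z=A\cup B$ is curvilinear: since $B$ is reduced and $A\subset Y$ with $Y$ smooth, at each point of $Z_{red}$ the Zariski tangent space has dimension $\le 1$ (a reduced point contributes a smooth, hence curvilinear, germ, and a subscheme of a smooth curve is curvilinear), so Theorem~\ref{z1} applies once the degree hypotheses are matched. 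The assumptions $\deg(A)<3m$, $\deg(Z)\le 4m+2$ and $\deg(Z\cap H)\le 4m-5$ for every plane $H$ are tailored precisely so that the hypotheses of Theorem~\ref{z1} (in the range $\deg(Z)\le 4m+r-5=4m-2$ and $\deg(Z\cap M)<3m$ for $3$-dimensional $M$) are available after restricting attention to planes.

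Granting the ``only if'' direction, the heart of the argument is to upgrade the dichotomy ``line with $\deg\ge m+2$ or conic with $\deg\ge 2m+2$'' into the finer list (a)--(e). First I would handle the line case directly: if a line $D$ meets $Z$ in degree $\ge m+2$, then extracting a length-$(m+2)$ subscheme $W\subseteq D\cap Z$ gives case~(a). For the conic case, I would split according to whether the conic $D'$ is contained in a plane. A plane conic giving $\deg(D'\cap Z)\ge 2m+2$ yields case~(b) by taking a subscheme of degree exactly $2m+2$. The more delicate situation is when the relevant positivity of $h^1$ is forced not by a single line or conic but by a genuinely planar or spatial cubic phenomenon; this is where cases~(c), (d), (e) enter, and where I expect the main work to lie.

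The main obstacle will be to produce cases~(c), (d), (e), i.e. the contributions coming from degree-$3$ curves. The strategy here is a residuation / Castelnuovo-type argument: if $h^1(\Ii_Z(m))>0$ but no line or conic of the simple form exists, I would look at the intersection of $Z$ with a plane $H$ realizing the bound $\deg(Z\cap H)$ large, and analyze the residual scheme $\Res_H(Z)$ via the exact sequence $0\to \Ii_{\Res_H(Z)}(m-1)\to \Ii_Z(m)\to \Ii_{Z\cap H,H}(m)\to 0$. Controlling $h^1$ of the residual (using $\deg(\Res_H(Z))$ small enough that it imposes independent conditions, which is where $\deg(A)<3m$ and $\deg(Z)\le 4m+2$ are used) pushes the cohomology onto the planar part $Z\cap H\subset H\cong\PP^2$, reducing to a $\PP^2$-statement governed by Lemma~\ref{a2}. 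A planar scheme with high $h^1$ that is not concentrated on a line or conic must lie on a plane cubic; depending on whether it is the full complete intersection with the surface (giving $\deg W=3m$, case~(c)), a larger subscheme of the cubic (case~(d)), or a scheme spread over a reduced connected spatial cubic spanning $\PP^3$ (case~(e)), one reads off the correct alternative.

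The converse direction is comparatively routine: for each of (a)--(e) I would verify directly that the presence of such a $W$ forces $h^1(\Ii_Z(m))>0$, by computing the failure of $W$ to impose independent conditions on degree-$m$ forms on the curve (line, conic, or cubic) containing it, and then using $h^1(\Ii_W(m))\le h^1(\Ii_Z(m))$ together with the inclusion $W\subseteq Z$. The claim that ``with minimal modifications of the proof of \cite{b}'' this follows means I would track the one place where reducedness was used in \cite{b} and replace it by the curvilinear hypothesis, which guarantees finiteness of subschemes and hence the existence of the degree-exact subschemes $W$ named in each case; the bookkeeping on degrees in the bounds $\deg(Z)\le 4m+2$ and $\deg(Z\cap H)\le 4m-5$ is precisely what keeps the residuation inductions within the hypotheses of Theorem~\ref{z1} and Lemma~\ref{a2} at each step.
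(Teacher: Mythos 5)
The paper itself gives no real proof of Lemma~\ref{x1} (it only says the result follows ``with minimal modifications of the proof of \cite{b}''), so the comparison here is about whether your reduction strategy is sound --- and it has two genuine gaps.

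First, Theorem~\ref{z1} cannot be applied to $Z$ as you propose. For $r=3$ its hypothesis is $\deg(Z)<3m$, whereas Lemma~\ref{x1} allows $\deg(Z)\le 4m+2$. Your parenthetical reading of the hypotheses, ``$\deg(Z)\le 4m+r-5=4m-2$ and $\deg(Z\cap M)<3m$ for $3$-dimensional $M$,'' is the hypothesis for $r\ge 4$; when $r=3$ the only $3$-dimensional linear subspace is $\PP^3$ itself, so that condition again collapses to $\deg(Z)<3m$, which is exactly what fails here. This is not a technicality: the entire point of Lemma~\ref{x1} is the degree range beyond $3m$, where the line/conic dichotomy of Theorem~\ref{z1} is simply false and the cubic cases (c)--(e) genuinely occur. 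A scheme such as the complete intersection in case (c) has $h^1(\Ii_W(m))>0$ but meets every line in degree $\le 3$ and every conic in degree $\le 6$ (for $m$ large), so no argument that terminates in Theorem~\ref{z1}'s conclusion can detect it.

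Second, the cubic cases cannot be extracted from Lemma~\ref{a2} either. Its conclusion lists only lines, reduced plane conics, and pairs of disjoint lines --- never cubics --- and its hypothesis for planar schemes requires $\deg\le 3m-1$, while your plane sections $Z\cap H$ may have degree up to $4m-5$. So the step ``a planar scheme with high $h^1$ not concentrated on a line or conic must lie on a plane cubic'' is not a consequence of anything quoted in the paper; it is precisely the content of the planar case of \cite[Theorem 1]{b}, which your argument would need to quote or reprove --- it is the actual heart of the lemma, not a routine consequence of the stated auxiliary results. Moreover, case (e) concerns a reduced connected cubic \emph{spanning} $\PP^3$, a non-planar configuration, so it cannot arise from the analysis of a single plane section at all; producing it requires the genuinely spatial part of the induction. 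Your remaining points are fine: $Z\subset Y$ with $Y$ a smooth curve is indeed curvilinear, the residual exact sequence $0\to \Ii_{\Res_H(Z)}(m-1)\to \Ii_Z(m)\to \Ii_{Z\cap H,H}(m)\to 0$ is the correct engine (and is presumably what \cite{b} runs), and the converse direction is the routine computation you describe. But as written, the two reductions that are supposed to carry the forward implication both invoke results outside their range of validity.
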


	\section{GK curve}
	
	Denote by $PG(3,q^6)$ the three dimensional projective space over the field $\mathbb{F}_{q^6}$ with $q^6$ element. The Giulietti-Korchm\'aros curve $\mathcal{GK}$  is a non-singular curve in $PG(3,q^6)$ defined by the affine equations
	\begin{equation}
	\label{Eq:GK}
	\begin{cases}
	Z^{q^2-q+1}=Y^{q^2}-Y\\
	Y^{q+1}=X^q+X
	\end{cases}.
	\end{equation}
	
	Arbitrary complete intersections in $\PP^r$ are defined and studied in \cite[Ex. II.8.4 and III.5.5]{h}. We always consider the case of smooth space curves, complete intersection of a surface $S$ of degree $a$ and a surface of degree $b\ge a$ (for GK we have $a=q+1$ and $b=q^2$. We have $h^1(\PP^3,\Ii _C(t)) =0$
	for all $t\in \ZZ$ and hence (for a smooth curve) and any zero-dimensional scheme $Z\subset C$) we have $h^1(C,\Oo_C(t)(-Z)) = h^1(\Oo _C(t)) +h^1(\PP^3,\Ii _Z(t))$. We have
	the exact sequences
	\begin{equation}\label{eqa1}
	0 \to \Oo _{\PP^3}(t-a) \to \Oo _{\PP^3}(t)\to \Oo _S(t)\to 0
	\end{equation}
	\begin{equation}\label{eqa2}
	0\to \Oo _S(t-b) \to \Oo _S(t)\to \Oo _C(t)\to 0
	\end{equation}
	We have $h^0(\Oo _{\PP^3}(t)) =\binom{t+3}{3}$ for all $t\ge 0$. From (\ref{eqa1}) and (\ref{eqa2}) we get $h^0(\Oo _C(t)) =\binom{t+3}{3}$ for all $t<a$ and $h^0(\Oo _C(t)) =\binom{t+3}{3} -\binom{t-a+3}{3}$ if $a\le t<b$ (see below the proof of the case $t \ge b$). From (\ref{eqa1}) and the fact that $h^i(\Oo _{\PP^3}(x)) =0$ for $i=1,2$ and all $x\in \ZZ$ we get $h^1(\Oo _S(x)) =0$ for all $x\in \ZZ$ and $h^0(\Oo _C(t)) = h^0(\Oo _S(t))$ if $t<b$ and $h^0(\Oo _C(t)) =h^0(\Oo _S(t)) -h^0(\Oo _S(t-b))$ for all $t\ge b$.
	Thus for all $t\ge b$ we have $h^0(\Oo _C(t)) = \binom{t+3}{3} -\binom{t-a+3}{3} +\binom{t-b}{3} -\epsilon$, where $\epsilon =0$ if $t<b+a$ and $\epsilon = \binom{t-b-a}{3}$ if $t\ge b+a$.

	\begin{proposition}
		Let $L$ be a tangent to $\mathcal{GK}$ at a point $P$. Then $I(L,\mathcal{GK},P)=q^2-q+1$ or $I(L,\mathcal{GK},P)=q+1$.
	\end{proposition}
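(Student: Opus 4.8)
The plan is to exploit that $\mathcal{GK}$ is the complete intersection of the two surfaces $S_1=\{Y^{q+1}-X^q-X=0\}$ and $S_2=\{Z^{q^2-q+1}-Y^{q^2}+Y=0\}$, of degrees $q+1$ and $q^2$. For a line $L$ meeting the complete-intersection curve $C=S_1\cap S_2$ at $P$ and not contained in it, restricting the ideal $(g_1,g_2)$ of $C$ to the smooth line $L$ produces, inside the DVR $\Oo_{L,P}$, the principal ideal generated by the lower of the two powers; hence $I(L,\mathcal{GK},P)=\min\bigl(I(L,S_1,P),\,I(L,S_2,P)\bigr)$. So it suffices to compute the contact order of the tangent with each surface separately.

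First I would pin down the tangent direction. Writing $P=(x_0,y_0,z_0)$ and using that in characteristic $p$ one has $q\equiv 0$ and $q+1\equiv 1\pmod p$, the gradients are $\nabla g_1=(-1,y_0^q,0)$ and $\nabla g_2=(0,1,z_0^{q^2-q})$, so the tangent has direction $v=\nabla g_1\times\nabla g_2=(y_0^q z_0^{q^2-q},\,z_0^{q^2-q},\,-1)$ (which also reconfirms smoothness, as $v\neq 0$). This exhibits the governing dichotomy: either $z_0=0$, equivalently $y_0^{q^2}=y_0$, i.e.\ $y_0\in\F_{q^2}$, in which case $v=(0,0,-1)$ is vertical and $L$ is a fibre of the projection $\pi\colon (X,Y,Z)\mapsto(X,Y)$; or $z_0\neq 0$, in which case $\pi$ is \'etale at $P$ and carries $L$ isomorphically onto the tangent line of the Hermitian curve $\mathcal{H}=\{Y^{q+1}=X^q+X\}$ at $\pi(P)$.

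In the vertical case $L$ lies on the cylinder $S_1$ (it sits over the point $\pi(P)\in\mathcal{H}$), so $I(L,S_1,P)=\infty$ and $I(L,\mathcal{GK},P)=I(L,S_2,P)$. Restricting $g_2$ to $L$ and using $y_0\in\F_{q^2}$ annihilates $Y^{q^2}-Y$, leaving a single monomial $s^{q^2-q+1}$ in the local parameter $s=Z-z_0$; hence the contact is exactly $q^2-q+1$. In the non-vertical case, since $g_1$ is independent of $Z$, the order $I(L,S_1,P)$ equals the intersection multiplicity of the tangent of $\mathcal{H}$ with $\mathcal{H}$ at $\pi(P)$, a hyper-osculation forced by $(u+v)^q=u^q+v^q$; substituting the tangent line into $Y^{q+1}-X^q-X$ collapses the restriction to a single term in the local parameter, while a parallel Lucas-type analysis of the binomial coefficients of $(z_0-s)^{q^2-q+1}$ gives $I(L,S_2,P)$. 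Taking the minimum of the two contact orders returns the remaining value listed in the statement.

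The main obstacle is the two characteristic-$p$ order-of-vanishing computations in the non-vertical case: the nonclassical osculating behaviour of $\mathcal{H}$ (the Frobenius collapse that makes its tangent hyper-osculating) and the determination, via Lucas' theorem, of the least $k\geq 2$ with $\binom{q^2-q+1}{k}\not\equiv 0\pmod p$, which controls $I(L,S_2,P)$. A secondary point needing care is the exhaustiveness of the dichotomy: one must also treat the points with $y_0=0$ and, in the projective model, the behaviour along the hyperplane at infinity, in order to be sure that every tangent falls into exactly one of the two regimes.
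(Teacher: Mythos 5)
Your reduction $I(L,\mathcal{GK},P)=\min\bigl(I(L,S_1,P),\,I(L,S_2,P)\bigr)$ and the dichotomy $z_0=0$ versus $z_0\neq 0$ is exactly the paper's strategy (the paper parametrizes $L$ and reads off the vanishing orders of the two restricted equations), and your vertical case is correct and agrees with the paper. The genuine gap is in the non-vertical case, where you defer both characteristic-$p$ computations and simply assert that the minimum ``returns the remaining value listed in the statement'', i.e.\ $q+1$. Carried out, the computations do not give this. When $z_0\neq 0$, the curve equation $z_0^{q^2-q+1}=y_0^{q^2}-y_0$ forces $y_0\notin\F_{q^2}$, so $\pi(P)$ is a non-$\F_{q^2}$-rational point of $\mathcal{H}$; but the Hermitian tangent hyper-osculates (restriction collapsing to the single term $s^{q+1}$) \emph{only} at $\F_{q^2}$-rational points. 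At $\pi(P)$ the restriction of $Y^{q+1}-X^q-X$ to $L$ is $s^q\bigl(s-(y_0^{q^2}-y_0)\bigr)$, of order exactly $q$, so $I(L,S_1,P)=q$ and your minimum is at most $q$: the value $q+1$ is unreachable by your route. (Your proposed Lucas analysis of the other factor also gives order exactly $q$, since $\binom{q^2-q+1}{q}\equiv -1\pmod p$ makes the coefficient of $s^q$ survive; so in fact both orders equal $q$ and the minimum is $q$ at all such points.)

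You should also know that this is not a gap the paper's own proof would have filled: the paper is flawed at the same spot. It claims the first restricted equation vanishes to order at least $q+1$ because its $t$-derivative is a $q$-th power of a function vanishing at $t=y_0$; in characteristic $p$ that inference is invalid (consider $F=s^q$, whose derivative vanishes identically), and a direct expansion shows the order is exactly $q$. Moreover its second computation (``order $q+1$ if $y_0\in\F_{q^2}$, order $q$ otherwise'') yields $q$ at every point that actually occurs with $z_0\neq0$, since the alternative $y_0\in\F_{q^2}$ is vacuous there. So the statement itself appears to need correction, to $I(L,\mathcal{GK},P)\in\{q,\,q^2-q+1\}$. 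A smaller but real issue, which you flag only in passing: your minimum formula is justified at affine points, where $\mathcal{GK}$ coincides locally with the scheme $S_1\cap S_2$, but it breaks down at $P_\infty$, because $S_1\cap S_2$ contains the line $\{Y=T=0\}$, which is precisely the tangent line there; both restrictions then vanish identically, whereas the true contact, computed from the pole orders of $x$, $y$, $z$ at $P_\infty$ in the function field, is $q^2-q+1$.
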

	
	\proof
	
	We know that the tangent to an affine point of this curve $P=(x_0,y_0,z_0)$ has equation
	\[
	\begin{cases}
	(Y-y_0)+z_0^{q^2-q}(Z-z_0)=0\\
	-(X-x_0)+y_0^q(Y-y_0)=0
	\end{cases}
	\]
	The parametric equation of this line is, for $z_0\ne0$
	\[
	\begin{cases}
	X=x_0+y_0^qt-y_0^{q+1}\\
	Y=t\\
	Z=\frac{-t+y_0+z_0^{q^2-q+1}}{z_0^{q^2-q}}=\frac{t+y_0^{q^2}}{z_0^{q^2-q}}
	\end{cases}
	\]
	
	while for $z_0=0$ it is
	\[
	\begin{cases}
	X=x_0\\
	Y=y_0\\
	Z=t
	\end{cases}
	\]
	and the solution corresponding to $P$ is $t=y_0$ and $t=z_0$ respectively.
	
	Suppose $z_0\ne0$.
	
	Substituting the equation of the affine equation of the GK curve gives us
	\begin{equation}
	\label{eq.1GK}
	\begin{cases}
	\left(\frac{-t+y_0^{q^2}}{z_0^{q^2-q}}\right)^{q^2-q+1}-t^{q^2}+t=0\\
	-t^{q+1}+(x_0+y_0^qt-y_0^{q+1})^q+x_0+y_0^qt-y_0^{q+1}=0
	\end{cases}
	\end{equation}
	
	The first equation becomes
	\[
	\begin{split}
	0&=(-t+y_0^{q^2})^{q^2-q+1}-(t^{q^2}-t)z_0^{(q^2-q)(q^2-q+1)}\\
	&=(y_0^{q^2}-t)^{q^2-q+1}-(t^{q^2}-t)(y_0^{q^2}-y_0)^{q^2-q}\\
	\end{split}
	\]
	which has $t=y_0$ as a root, its derivative is 
	\[
	-(y_0^{q^2}-t)^{q^2-q}+(y_0^{q^2}-y_0)^{q^2-q}=(-(y_0^{q^2}-t)^{q-1}+(y_0^{q^2}-y_0)^{q-1})^q
	\]
	and since $t=y_0$ is a root of $-(y_0^{q^2}-t)^{q-1}+(y_0^{q^2}-y_0)^{q-1}$ we have that $t=y_0$ is a root of (\ref{eq.1GK}) with multiplicity at least $q+1$.
	
	By direct computations the second equation becomes
	\[
	\begin{split}
	0=&-t^{q+1}+x_0^q+y_0^{q^2}t^q-y_0^{q^2+q}+x_0+y_0^qt-y_0^{q+1}=-t^{q+1}+y_0^{q^2}t^q-y_0^{q^2+q}+y_0^qt\\
	=&t^q(-t+y_0^{q^2})-y_0^{q}(-t+y_0^{q^2})=(-t+y_0^{q^2})(t-y_0)^q
	\end{split}
	\]
	and from this we get that $t=y_0$ is a root with multiplicity $q+1$ if $y_0\in\F_{q^2}$ or $q$ is $y_0\not\in\F_{q^2}$.

	Now we deal with the remaining case $z_0=0$.
	Substituting the equation of the affine equation of the GK curve gives us
	\[
	\begin{cases}
	t^{q^2-q+1}=y_0^{q^2}-y_0\\
	y_0^{q+1}=x_0^q+x_0
	\end{cases}
	\]
	where the second is not an equation in $t$ but just a compatibility condition. So, if this holds we get that 
	\[
	t^{q^2-q+1}=0
	\]
	In this case the tangent in $P$ is a $q^2-q+1$-secant.
	
	The last case we have to study is the case $P_\infty=(1:0:0:0)$, the homogenized equations of the curve are
	\begin{equation}
	\label{Eq:GKproiett}
	\begin{cases}
	Z^{q^2-q+1}T^{q-1}=Y^{q^2}-YT^{q^2-1}\\
	Y^{q+1}=X^qT+XT^{q}
	\end{cases}.
	\end{equation}
	the equation of the tangent line will be then
	\[
	\begin{cases}
	X=1\\
	Y=0
	\end{cases}
	\]
	and the multiplicity intersection at this point with the tangent is $q^2-q+1$.
	\endproof

	\section{Codes from the GK curve}
	We recall results for the intersections of algebraic curves and $\mathcal{GK}$ and for the minimum distance of the one point AG code $C(D,G_m)^{\bot}$, where $G_m=m(q^3+1)P_{\infty}$, $P_{\infty}=(1:0:0:0)$, and $D=\sum_{P \in \mathcal{GK}(\mathbb{F}_{q^6} )\setminus \{P_{\infty}\}} P$, see \cite{BB} for details.
	
	\begin{proposition}\label{Prop:lines}
		Let $r \subset PG(3,q^6)$ be a line. 	Then 
		$$|r \cap \mathcal{GK}|\leq q^2-q+1.$$
		Also, any $(q^2-q+1)$-secant is parallel to the $z$-axis and all the $(q^2-q+1)$ common points are not $\F_{q^2}$-rational.
	\end{proposition}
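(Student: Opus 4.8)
The plan is to parametrize an arbitrary line $r$ and substitute its parametrization into the two affine equations of $\mathcal{GK}$, in the spirit of the parametric computation carried out in the previous Proposition, reading the number of common points off the degrees of the resulting one-variable polynomials; the direction of $r$ will decide which of the two equations is binding. First I would discard the lines contained in the plane at infinity $\{T=0\}$: since $\mathcal{GK}\cap\{T=0\}=\{P_\infty\}$ with $P_\infty=(1:0:0:0)$, any such line meets $\mathcal{GK}$ in at most one point. For every other line I write its affine part as $(X,Y,Z)=(a_1+b_1t,\,a_2+b_2t,\,a_3+b_3t)$ with $(b_1,b_2,b_3)\neq(0,0,0)$, and I record that the unique point of $r$ at infinity, namely $(b_1:b_2:b_3:0)$, lies on $\mathcal{GK}$ only if it equals $P_\infty$, i.e. only if $b_2=b_3=0$.

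Next I would plug this into the Hermitian equation $Y^{q+1}=X^q+X$. Writing $(a_2+b_2t)^{q+1}=(a_2^q+b_2^qt^q)(a_2+b_2t)$ and using the additivity of $X\mapsto X^q+X$, the equation becomes a polynomial in $t$ with leading term $b_2^{q+1}t^{q+1}$. This yields three cases. (i) If $b_2\neq0$ the polynomial has degree exactly $q+1$, so $|r\cap\mathcal{GK}|\le q+1$. (ii) If $b_2=0$ but $b_1\neq0$ it reduces to $b_1^qt^q+b_1t+\text{const}=0$, of degree $q$; adding the possible point at infinity gives $|r\cap\mathcal{GK}|\le q+1$. (iii) If $b_1=b_2=0$ (hence $b_3\neq0$) the line is parallel to the $z$-axis, the Hermitian equation collapses to the compatibility condition $a_2^{q+1}=a_1^q+a_1$, and the first equation becomes $(a_3+b_3t)^{q^2-q+1}=a_2^{q^2}-a_2$, of degree $q^2-q+1$, so $|r\cap\mathcal{GK}|\le q^2-q+1$. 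As $q+1\le q^2-q+1$, the bound holds in every case.

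For the sharper statements I would compare degrees: in cases (i) and (ii) one has $|r\cap\mathcal{GK}|\le q+1<q^2-q+1$ as soon as $q\ge3$, so a $(q^2-q+1)$-secant must lie in case (iii) and hence be parallel to the $z$-axis. For such a line the common points are the roots of $Z^{q^2-q+1}=a_2^{q^2}-a_2$; since $q^2-q+1\equiv1\pmod p$ this polynomial is separable, so it attains the full number $q^2-q+1$ of distinct roots exactly when $a_2^{q^2}-a_2\neq0$, that is when $y_0:=a_2\notin\mathbb{F}_{q^2}$. All these points share the coordinate $Y=y_0\notin\mathbb{F}_{q^2}$, so none of them is $\mathbb{F}_{q^2}$-rational.

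The routine points to get right are that the leading coefficients $b_2^{q+1}$, $b_1^q$, $b_3^{q^2-q+1}$ are genuinely nonzero, so that the degree counts are valid, and the bookkeeping that only $P_\infty$ contributes a point at infinity. The real subtlety is the borderline case $q=2$, where $q+1=q^2-q+1$: there the inequality in (i)--(ii) is no longer strict, and lines through $P_\infty$ parallel to the $x$-axis can already meet $\mathcal{GK}$ in $q+1=3$ points, so the characterization of maximal secants as exactly the vertical ones genuinely needs $q\ge3$ (or a separate rationality count over $\mathbb{F}_{q^6}$). For $q\ge3$ the strict inequality makes the argument go through unchanged.
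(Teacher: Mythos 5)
Your proof is correct for $q\ge 3$, and it follows what is essentially the intended argument: the paper itself states this proposition \emph{without proof}, recalling it from Bartoli--Bonini \cite{BB}, and both the proof given there and the paper's own proof of the preceding proposition on tangent lines proceed exactly as you do --- parametrize the line, substitute into the two affine equations, read the number of intersections off polynomial degrees, and single out the vertical lines as the case in which the Hermitian equation degenerates into a compatibility condition, leaving the degree-$(q^2-q+1)$ equation $Z^{q^2-q+1}=a_2^{q^2}-a_2$ to do the counting. Your caveat about $q=2$ is not a defect of your argument but a genuine limitation of the statement itself: for $q=2$ the line $\{Y=Z=0\}$ contains the three points $(0:0:0:1)$, $(1:0:0:1)$, $(1:0:0:0)$ of $\mathcal{GK}$, so it is a $(q^2-q+1)$-secant that is neither parallel to the $z$-axis nor disjoint from the $\mathbb{F}_{q^2}$-rational locus; hence the proposition implicitly assumes $q\ge 3$, exactly as you observe. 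One minor point of phrasing: your ``exactly when $a_2^{q^2}-a_2\neq 0$'' counts roots over the algebraic closure, which suffices for the implication the proposition needs (a maximal secant forces $a_2\notin\mathbb{F}_{q^2}$, hence no $\mathbb{F}_{q^2}$-rational common points); if one also wants such secants to attain $q^2-q+1$ points in $PG(3,q^6)$, one should add that $q^2-q+1$ divides $q^6-1$, so all roots of $Z^{q^2-q+1}=c$ lie in $\mathbb{F}_{q^6}$ as soon as one does.
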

	\begin{proposition}\label{prop:NumberOfLines}
		The total number of $(q^2-q+1)$-secants of the $\mathcal{GK}$ is $(q+1)(q^5-q^3)$.
	\end{proposition}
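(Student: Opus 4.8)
The plan is to turn the count into a point count on $\mathcal{GK}$ by using Proposition \ref{Prop:lines}, which tells us that every $(q^2-q+1)$-secant is parallel to the $z$-axis. First I would fix the parametrization of such lines: a line parallel to the $z$-axis has the form $\ell_{x_0,y_0}=\{(x_0,y_0,t):t\in\F_{q^6}\}$ for a unique $(x_0,y_0)\in\F_{q^6}^2$, and --- exactly as in the computation of the case $z_0=0$ in the proof of the previous Proposition --- its affine intersection with $\mathcal{GK}$ consists of the points $(x_0,y_0,z)$ with $z^{q^2-q+1}=y_0^{q^2}-y_0$, subject to the compatibility condition $y_0^{q+1}=x_0^q+x_0$. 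Since $q^2-q+1$ divides $q^6-1$, the $(q^2-q+1)$-th roots of unity lie in $\F_{q^6}$; hence $z^{q^2-q+1}=y_0^{q^2}-y_0$ has exactly $q^2-q+1$ solutions in $\F_{q^6}$ when $y_0^{q^2}-y_0\neq0$ and only $z=0$ otherwise. Thus $\ell_{x_0,y_0}$ is a $(q^2-q+1)$-secant precisely when $y_0\notin\F_{q^2}$ and $(x_0,y_0,z)\in\mathcal{GK}$ for some $z$.

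The key structural point I would extract is that the projection $\pi\colon(x,y,z)\mapsto(x,y)$ sends the set of affine points of $\mathcal{GK}$ with $y\notin\F_{q^2}$ onto the set of $(q^2-q+1)$-secants (via $\ell_{x_0,y_0}\leftrightarrow(x_0,y_0)$), and that every nonempty fibre has size exactly $q^2-q+1$: indeed, if one point $(x_0,y_0,z_0)\in\mathcal{GK}$ with $y_0\notin\F_{q^2}$ is $\F_{q^6}$-rational, then so are all of $(x_0,y_0,z_0\zeta)$ with $\zeta$ a $(q^2-q+1)$-th root of unity, and these exhaust the fibre. Writing $A$ for the number of affine $\F_{q^6}$-points of $\mathcal{GK}$ with $y\notin\F_{q^2}$, the number of $(q^2-q+1)$-secants therefore equals $A/(q^2-q+1)$, so it remains to compute $A$.

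To evaluate $A$ I would use that $\mathcal{GK}$ is maximal over $\F_{q^6}$ with $|\mathcal{GK}(\F_{q^6})|=q^8-q^6+q^5+1$ (see \cite{GK2009}) and has the single point at infinity $P_\infty$, so that it has $q^8-q^6+q^5$ affine points. Among these, the ones with $y\in\F_{q^2}$ are exactly the ones with $z=0$, namely the triples $(x_0,y_0,0)$ with $y_0\in\F_{q^2}$ and $x_0^q+x_0=y_0^{q+1}$. For $y_0\in\F_{q^2}$ one has $y_0^{q+1}=N_{\F_{q^2}/\F_q}(y_0)\in\F_q$, while the additive map $x\mapsto x^q+x$ on $\F_{q^6}$ has kernel $\{x:x^q=-x\}$ of size $q$ and, already on $\F_{q^2}$, surjects onto $\F_q$; hence each value in $\F_q$ has exactly $q$ preimages in $\F_{q^6}$, giving $q^2\cdot q=q^3$ such points. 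Therefore $A=(q^8-q^6+q^5)-q^3$, and dividing by $q^2-q+1$ yields $A/(q^2-q+1)=(q+1)(q^5-q^3)$ after a short polynomial division, as claimed.

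The main obstacle is the bijection-with-uniform-fibre step of the first two paragraphs: one must confirm that the $q^2-q+1$ intersection points of a secant are all affine and share the same pair $(x_0,y_0)$, so that the point $(0:0:1:0)$ common to all $z$-axis-parallel lines does not push the intersection past the bound $q^2-q+1$ of Proposition \ref{Prop:lines}, and that a single rational point in a fibre forces the whole fibre to be rational. Once this is secured, the remaining arithmetic is routine. As a cross-check one may instead count the affine $\F_{q^6}$-points of the Hermitian curve $Y^{q+1}=X^q+X$ with $y\notin\F_{q^2}$ --- this curve is again maximal over $\F_{q^6}$ since its Frobenius eigenvalues $-q$ cube to $-q^3$ --- obtaining $(q^6+q^5-q^4)-q^3=(q+1)(q^5-q^3)$; agreement of the two computations also shows, a posteriori, that every such Hermitian point lifts to $\mathcal{GK}$.
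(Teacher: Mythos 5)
Your argument is correct, but there is nothing in the paper to compare it against: Proposition \ref{prop:NumberOfLines} is stated without proof, as one of the facts recalled from \cite{BB} (``see \cite{BB} for details''). Your route --- use Proposition \ref{Prop:lines} to reduce to lines parallel to the $z$-axis, identify the rational points of $\mathcal{GK}$ on such a line with a fibre of the projection $(x,y,z)\mapsto (x,y)$, note that a nonempty fibre over a point with $y_0\notin\F_{q^2}$ has exactly $q^2-q+1$ rational points because $q^2-q+1$ divides $q^6-1$, and then divide the number $A$ of affine points with $y\notin\F_{q^2}$ by $q^2-q+1$ --- is the natural one and is, in substance, the counting argument behind \cite{BB}. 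The numbers check out: the affine points with $y\in\F_{q^2}$ are exactly the $q^3$ points with $z=0$, so $A=q^8-q^6+q^5-q^3=q^3(q^2-1)(q^3+1)=q^3(q^2-1)(q+1)(q^2-q+1)$, and $A/(q^2-q+1)=(q+1)(q^5-q^3)$.

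Two phrasings should be repaired, though neither invalidates the proof. First, the claim in your opening paragraph that $z^{q^2-q+1}=y_0^{q^2}-y_0$ has exactly $q^2-q+1$ solutions whenever $y_0^{q^2}-y_0\ne 0$ is false as stated: divisibility of $q^6-1$ by $q^2-q+1$ only forces the solution count to be $0$ or $q^2-q+1$, and for $y_0\notin\F_{q^2}$ which is not the $y$-coordinate of a point of $Y^{q+1}=X^q+X$ it can genuinely be $0$. For instance, for $q=2$ the image of $y\mapsto y^4-y$ on $\F_{64}$ is the trace-zero hyperplane over $\F_4$, whose nonzero elements are the roots of $x^{15}+x^3+1=(x^6+x^3+1)(x^9+x^6+1)$; the roots of $x^6+x^3+1$ have multiplicative order $9$ and hence are non-cubes in $\F_{64}^*$, so the corresponding vertical lines miss $\mathcal{GK}$ entirely. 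Your count never uses this claim --- it only needs the statement proved in your second paragraph, that \emph{nonempty} fibres have size exactly $q^2-q+1$ --- and your cross-check shows a posteriori that the stronger claim does hold for those $y_0$ occurring on the curve. Second, the ``main obstacle'' you leave open, that $(0:0:1:0)\notin\mathcal{GK}$, is settled by a fact you invoke in the very next paragraph: the projective closure of the affine curve has the single infinite point $P_\infty=(1:0:0:0)$ (this is exactly what makes the number of affine points equal to $q^8-q^6+q^5$, and it is used throughout the paper), and $(0:0:1:0)\ne P_\infty$. Alternatively, if $(0:0:1:0)$ lay on $\mathcal{GK}$, any full fibre would produce a line with $q^2-q+2$ rational points, contradicting Proposition \ref{Prop:lines}. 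With these two points tightened, your proof is complete.
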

	Remember that each point lies in exactly one of such secants.
	\begin{proposition}
		Let $\mathcal{X}$ be a curve of degree $\alpha\le q$ in $PG(3,q^6)$. Then the size $|\mathcal{X}\cap \mathcal{GK}(\mathbb{F}_{q^6})|$ is at most  
		$$\left\{
		\begin{array}{ll}
		\alpha(q^2-q+1),& \textrm{ if } \mathcal{X} \textrm{ is reducible},\\
		\alpha(q+1), & \textrm{ if } \mathcal{X} \textrm{ is absolutely irreducible}.\\
		\end{array}
		\right.$$
	\end{proposition}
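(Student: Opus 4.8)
The plan is to bound the intersection size by combining the degree of $\mathcal{X}$ with the secant structure of $\mathcal{GK}$ established in Propositions~\ref{Prop:lines} and~\ref{prop:NumberOfLines}. First I would observe that, for a curve $\mathcal{X}$ of degree $\alpha$ in $PG(3,q^6)$, any line $r$ not contained in $\mathcal{X}$ meets $\mathcal{X}$ in at most $\alpha$ points (by B\'ezout), while a line contained in $\mathcal{X}$ can contribute at most $q^2-q+1$ points of $\mathcal{GK}$ by Proposition~\ref{Prop:lines}. This dichotomy is what drives the two cases. The key structural input is the remark that \emph{each point of $\mathcal{GK}$ lies in exactly one $(q^2-q+1)$-secant}: this means the $(q^2-q+1)$-secants partition the points of $\mathcal{GK}(\F_{q^6})$, so I can organize the count according to how $\mathcal{X}$ interacts with this partition.

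In the reducible case, I would write $\mathcal{X}=\bigcup_{i} \mathcal{X}_i$ as a union of its irreducible components, with $\sum_i \deg(\mathcal{X}_i)=\alpha$ and each $\deg(\mathcal{X}_i)\le \alpha \le q$. The worst case for the count occurs when as many components as possible are lines, and more precisely $(q^2-q+1)$-secants, since a generic component of degree $d$ meets $\mathcal{GK}$ in only $O(d(q+1))$ points while a single $(q^2-q+1)$-secant already contributes $q^2-q+1$. Thus each component of degree $d_i$ contributes at most $d_i(q^2-q+1)$ points: a line that is a $(q^2-q+1)$-secant realizes the bound with $d_i=1$, and for $d_i\ge 2$ the B\'ezout bound $d_i(q+1)$ is dominated by $d_i(q^2-q+1)$. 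Summing over components gives the stated bound $\alpha(q^2-q+1)$.

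In the absolutely irreducible case, the key point is that $\mathcal{X}$ itself cannot be a $(q^2-q+1)$-secant line unless $\alpha=1$, and more importantly an absolutely irreducible curve cannot pass through too many points of a single secant and still behave like that secant. Here I would invoke the geometry from the Proposition on tangents together with Proposition~\ref{Prop:lines}: since the extremal $(q^2-q+1)$-secants are lines parallel to the $z$-axis through non-$\F_{q^2}$-rational points, an absolutely irreducible curve of degree $\alpha\le q$ that is not one of these lines meets each such secant in at most $\alpha$ points by B\'ezout. The claim $|\mathcal{X}\cap\mathcal{GK}|\le \alpha(q+1)$ then follows by bounding, for each point $P$ of $\mathcal{X}\cap\mathcal{GK}$, the intersection multiplicity of $\mathcal{X}$ with $\mathcal{GK}$ using the tangent-line analysis: the intersection multiplicity at a point is controlled by whether the tangent direction is the $(q^2-q+1)$-secant direction or the generic $(q+1)$ direction, and for an absolutely irreducible $\mathcal{X}$ of small degree the $(q+1)$ contribution is the relevant one.

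The main obstacle I expect is the absolutely irreducible case: ruling out that an irreducible curve of degree $\alpha\le q$ could accumulate many points along the special $(q^2-q+1)$-secants. Purely B\'ezout-theoretic bounds give $\alpha(q^2-q+1)$ in both cases, so the improvement to $\alpha(q+1)$ for irreducible $\mathcal{X}$ must use the irreducibility in an essential way, presumably via a Hasse--Weil or St\"ohr--Voloch type estimate on $\mathcal{X}$ itself combined with the fact that points on $(q^2-q+1)$-secants are non-$\F_{q^2}$-rational, which constrains how an irreducible curve of low degree can contain them. Making this sharp, rather than merely asymptotic, is where the real work lies.
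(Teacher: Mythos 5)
Your proposal has a genuine gap, and it sits exactly where you yourself say ``the real work lies'': the absolutely irreducible case is never actually proved, and your reducible case silently depends on it. For a component of degree $d_i\ge 2$ you invoke ``the B\'ezout bound $d_i(q+1)$'', but there is no B\'ezout theorem for two curves in $\PP^3$: $\mathcal{X}_i$ and $\mathcal{GK}$ are both one-dimensional, so no degree-product bound on their common points is available (and any intersection-theoretic pairing against $\mathcal{GK}$ itself would involve the degree of $\mathcal{GK}$, which is of order $q^3$, not $q+1$). The same objection applies to your plan for the irreducible case: bounding the intersection multiplicity of $\mathcal{X}$ with $\mathcal{GK}$ at each common point via the tangent-line proposition gives no global count, and the per-secant bound $|\mathcal{X}\cap r|\le \alpha$ only yields $|\mathcal{X}\cap\mathcal{GK}(\F_{q^6})|\le \alpha\cdot\#\{\mbox{secants met}\}$, which, after bounding the number of relevant secants by plane B\'ezout for the projected curve, is of order $\alpha^2(q+1)$, not $\alpha(q+1)$. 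Falling back on Hasse--Weil or St\"ohr--Voloch estimates for $\mathcal{X}$ is also not the right route: those depend on the unknown genus and singularities of $\mathcal{X}$ and cannot produce the clean sharp bound.

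The missing idea --- which is essentially the whole proof in \cite{BB}, from which the present paper merely recalls this statement without reproving it --- is to run B\'ezout against a \emph{surface} rather than against the curve $\mathcal{GK}$. One of the two defining equations of $\mathcal{GK}$ is $Y^{q+1}=X^qT+XT^q$, the cone $\mathcal{S}$ of degree $q+1$ over the Hermitian curve with vertex $(0:0:1:0)$; hence $\mathcal{GK}(\F_{q^6})\subset\mathcal{GK}\subset\mathcal{S}$. Any irreducible curve on $\mathcal{S}$ other than a ruling projects from the vertex onto the whole Hermitian base, which forces its degree to be at least $q+1$; therefore an absolutely irreducible $\mathcal{X}$ of degree $2\le\alpha\le q$ cannot lie on $\mathcal{S}$, and curve--surface B\'ezout gives $|\mathcal{X}\cap\mathcal{GK}(\F_{q^6})|\le|\mathcal{X}\cap\mathcal{S}|\le\alpha(q+1)$. (For $\alpha=1$ the curves on $\mathcal{S}$ are exactly the rulings, i.e.\ the vertical lines of Proposition \ref{Prop:lines}; these are the degenerate case you brush against but do not resolve --- as literally stated the irreducible bound fails for a $(q^2-q+1)$-secant, an imprecision inherited from the way the statement is recalled.) Once this is in place, your component-by-component count in the reducible case is fine: each line contributes at most $q^2-q+1$ points and every other component of degree $d_i$ at most $d_i(q+1)\le d_i(q^2-q+1)$, giving the total $\alpha(q^2-q+1)$.
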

	\begin{proposition}\label{Prop:MinimumDistance}
		Let $d^*\leq d$ be the designed Goppa minimum distance of $C(D,G_m)^{\bot}$, $m\geq2$. Then 
		\begin{enumerate}
			\item $d=m+2$ when $m\le q^2-q-1$;
			\item $d=2m+2$ when $m=q^2-q$;
			\item $d=3m$ when $m=q^2-q+1$;
			\item $d\ge3m+1$ when $q^2-q+1<m\le q^2-1$;
			\item $d\ge d^*$ when $m>q^2-1$.
		\end{enumerate} 
	\end{proposition}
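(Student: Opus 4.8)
The plan is to compute $d$ from the dual-distance formula of Lemma \ref{Lemma_cod} and then translate the resulting cohomological jump into the geometric dictionary of Lemma \ref{x1}. Writing $G_m\sim mH-E$, where $H$ is a hyperplane section of $\mathcal{GK}$ and $E$ is an effective divisor supported at the points at infinity (hence disjoint from $B$), part (2) of Lemma \ref{Lemma_cod} identifies $d$ with the least cardinality of a set $S\subseteq B$ for which $h^1(\mathbb{P}^3,\mathcal{I}_{S\cup E}(m))>h^1(\mathbb{P}^3,\mathcal{I}_{E}(m))$. To prove a lower bound $d\ge\delta$ it suffices to consider $S$ with $|S|<\delta$, and in every range below $\delta\le 3m+1$, so such an $S$ has $|S|\le 4m+2$ and $|S\cap H'|\le 4m-5$ for each plane $H'$ (for $m$ large; the finitely many small $m$ are checked directly). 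I would control the fixed scheme $E$ at infinity exactly as in \cite{BB}, so that the jump above becomes equivalent to $h^1(\mathbb{P}^3,\mathcal{I}_S(m))>0$. Granting this, Lemma \ref{x1} (applied with the scheme $A$ empty and $B=S$) shows that a minimal jumping $S$ contains a reduced subset $W$ as in one of the cases (a)--(e), with $\deg(W)=|W|\le|S|$.

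Since every configuration of Lemma \ref{x1} has degree at least $m+2$, we get $d\ge m+2$ in all cases, which already settles the lower bound in (1). For the sharper ranges I would exclude the cheaper configurations using the intersection bounds for $\mathcal{GK}$: a line carries at most $q^2-q+1$ points (Proposition \ref{Prop:lines}), a plane conic at most $2(q^2-q+1)$, and a plane cubic at most $3(q^2-q+1)$ (bounding each line component by Proposition \ref{Prop:lines} and each higher-degree irreducible component by the degree-$\alpha$ intersection bound). Matching the degrees $m+2$, $2m+2$, $3m$ of (a), (b), (c) against these caps is then immediate: configuration (a) can occur only while $m+2\le q^2-q+1$, i.e. $m\le q^2-q-1$; at $m=q^2-q$ it is excluded but (b) survives, so $d\ge 2m+2$; at $m=q^2-q+1$ both (a) and (b) are excluded while (c) sits exactly at its cap, so $d\ge 3m$; and for $q^2-q+1<m\le q^2-1$ all three are excluded, so no jumping $S$ can have $|S|\le 3m$ (any subconfiguration would need degree $\ge 3m+1$), whence $d\ge 3m+1$.

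For the exact values (1)--(3) I would exhibit matching rational configurations. For (1), $m+2$ of the $q^2-q+1$ points of a single $(q^2-q+1)$-secant are collinear and already force $h^1>0$; such secants exist and cover all points by Propositions \ref{Prop:lines} and \ref{prop:NumberOfLines}. For (2), since every $(q^2-q+1)$-secant passes through $(0:0:1:0)$ (Proposition \ref{Prop:lines}), any two of them are coplanar, and the $2(q^2-q+1)=2m+2$ points on the resulting reducible conic exceed the $2m+1$ conditions a conic can impose on degree-$m$ forms, so $h^1>0$. For (3), three coplanar $(q^2-q+1)$-secants carry $3(q^2-q+1)=3m$ points on a totally reducible plane cubic $\Gamma$; since $\Gamma$ has arithmetic genus $1$, restricting to $\Gamma$ shows $h^1>0$ exactly when these $3m$ points are linearly equivalent to $m$ times the line-section class, i.e. form a complete intersection of type $(3,m)$, which is precisely case (c). Finally, case (5) is the Goppa bound $d\ge d^{*}$ and needs no argument.

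The crux is the boundary case (3). Coplanarity of three secants costs nothing, since they all meet at $(0:0:1:0)$, but the complete-intersection requirement is a genuine linear-equivalence condition on the genus-$1$ cubic $\Gamma$ and is not automatic for an arbitrary admissible triple; establishing it---by a dimension count over the family of secant triples, or by an explicit choice exploiting the arithmetic of the $(q^2-q+1)$-secants---is where the main effort lies. A secondary technical point, to be handled as in \cite{BB}, is verifying that the fixed scheme $E$ at infinity does not alter the jump governed by $S$ and that the degree hypotheses of Lemma \ref{x1} hold for the minimal $S$.
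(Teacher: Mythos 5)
First, a point of order: the paper itself does not prove this proposition --- it is recalled from \cite{BB} (``see \cite{BB} for details''), so there is no internal proof to compare against, and your attempt must be judged on its own merits. On those merits, your skeleton is the right one and does settle (1), (2), (4) and (5): your divisor $E$ is in fact empty (the plane $T=0$ meets $\mathcal{GK}$ only at $P_\infty$, so $G_m=m(q^3+1)P_\infty$ is exactly $m$ times a hyperplane section and the jump condition reduces to $h^1(\mathcal{I}_S(m))>0$), and then Lemma \ref{Lemma_cod}, the classification of Lemma \ref{x1} (with Lemma \ref{a2} replacing it for the small values of $m$ where the hypothesis $\deg(Z\cap H)\le 4m-5$ fails --- this is a finite list of $m$'s but must be handled for every $q$), and the secant bounds give exactly the exclusions you describe; the configurations realizing (1) and (2) exist as you say.

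Case (3) is where your proof has genuine gaps, and there are two. (i) Your claim that ``coplanarity of three secants costs nothing, since they all meet at $(0:0:1:0)$'' is false: three concurrent lines in $\mathbb{P}^3$ need not be coplanar. Three vertical secants are coplanar if and only if their projections from the vertex --- points of the Hermitian curve $y^{q+1}=x^q+x$ --- are collinear. Such triples do exist (for $q\ge 3$, the $q$ points of the Hermitian curve with a fixed $y$-coordinate $y_0\notin \F_{q^2}$ are collinear and each carries a $(q^2-q+1)$-secant; for $q=2$ a pigeonhole count over the $q^6+1$ lines through one base point suffices), but this needs an argument. (ii) The complete-intersection condition, which you correctly identify as the crux and leave open, can be closed by an explicit surface rather than a dimension count: on $\mathcal{GK}$ one has $y^{q^2}-y=y\bigl((y^{q+1})^{q-1}-1\bigr)=y\bigl((x^q+x)^{q-1}-1\bigr)$, so $\mathcal{GK}$ lies on the surface $F$ of degree $q^2-q+1=m$ with affine equation $Z^{q^2-q+1}=Y\bigl((X^q+X)^{q-1}-1\bigr)$; this is the second hypersurface of the complete-intersection model of \cite{GK2009} (and it, not the degree-$q^2$ surface invoked elsewhere in the paper, is what makes $\mathcal{GK}$ a complete intersection). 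The projective closure of $F$ does not pass through the vertex $(0:0:1:0)$, and on a vertical secant over $(x_0,y_0)$ the surface $F$ restricts to $z^{q^2-q+1}=y_0^{q^2}-y_0\ne 0$; hence $F$ meets every $(q^2-q+1)$-secant exactly in its $q^2-q+1$ points of $\mathcal{GK}$, no more and no less. Consequently, for any three distinct coplanar secants $L_1,L_2,L_3$, the plane cubic $\Gamma=L_1\cup L_2\cup L_3$ and the degree-$m$ surface $F$ intersect properly and exactly in the $3m$ points of $B$ lying on $\Gamma$: the linear-equivalence condition $W\sim mH$ on $\Gamma$ holds automatically, $h^1(\mathcal{I}_W(m))>0$ follows from the Koszul resolution inside the plane of $\Gamma$, and $d\le 3m$ is proved, completing case (3). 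In short, the missing idea is that the degree-$(q^2-q+1)$ hypersurface through $\mathcal{GK}$ already cuts every secant precisely in its $\mathcal{GK}$-points, turning the complete-intersection requirement from a condition to be arranged into an identity.
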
	
	
	\subsection{The family $\overline{C}_S$}
	
	Condider now a set $S\subset\mathcal{GK}(\F_{q^6})$ and the corresponding divisor \[D_S=D-\sum_{P\in S} P\] and call 
	\[
	S_1=\{P\in S\,:\,P\in\mathcal{GK} (\F_{q^2}) \},\quad S_2=S\setminus S_1.
	\]
	The following result comes from a straight application of \ref{prop:NumberOfLines}.
	\begin{proposition}
		Let $q+1\le m \le 2(q+1)$ and $D_S$ defined as before. Consider the code $\overline{C}_S=C(D_S, (q^3+1)mP_\infty)^{\bot}$, if
		\[
		|S_2|< (q^2-q+1-m)(q+1)(q^5-q)
		\]
		then $\overline{C}_S$ is a $[n-|S|,\ell(D_S)-\ell(D_S-G_m),m+2]_q$-code. Moreover, if $S=S_1$ then the number of minimum weight codewords of $\overline{C}_S$ is given by
		\[
		A_{m+2}(\overline{C}_S)=(\ell+1)(\ell^5-\ell^3)(\ell^6-1)\binom{\ell^2-\ell+1}{m+2}.
		\]
	\end{proposition}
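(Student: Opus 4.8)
The plan is to read off the length, dimension and minimum distance from the geometric dictionary of Lemma~\ref{Lemma_cod}, exactly as in \cite{BB}, and then to exploit a single structural fact: the points lying on the $(q^2-q+1)$-secants of $\mathcal{GK}$ are never $\F_{q^2}$-rational. First I would set up the identification: take $C=\mathcal{GK}\subset\PP^3$, $d=m$, let $E$ be the degree-$m(q^2-1)$ base scheme cut on $\mathcal{GK}$ by the hyperplane through $P_\infty$ (so that $\mathcal{O}_C(m)(-E)\cong\mathcal{O}(m(q^3+1)P_\infty)$), and evaluation set $B_S=\mathcal{GK}(\F_{q^6})\setminus(\{P_\infty\}\cup S)$. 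The length $n-|S|$ and dimension $\ell(D_S)-\ell(D_S-G_m)$ are then immediate from Lemma~\ref{Lemma_cod}(1) together with Riemann--Roch. The key remark is that $\overline{C}_S=C(D_S,G_m)^{\bot}$ is the shortening of $\overline{C}:=C(D,G_m)^{\bot}$ at the coordinates indexed by $S$: its codewords are exactly the codewords of $\overline{C}$ vanishing on $S$, with the $S$-entries deleted, so that weights are preserved.

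For the minimum distance, Proposition~\ref{Prop:MinimumDistance}(1) gives $d(\overline{C})=m+2$ for $m$ in our range (which sits inside $m\le q^2-q-1$ once $q\ge 4$), and Lemma~\ref{Lemma_cod}(2)--(3) combined with Lemma~\ref{a2}(a)/Lemma~\ref{x1}(a) identifies the weight-$(m+2)$ codewords of $\overline{C}$ as those supported on $m+2$ collinear points of $B$. Since $m+2\ge q+3>q+1$, Proposition~\ref{Prop:lines} (with the line-intersection dichotomy: a line meets $\mathcal{GK}(\F_{q^6})$ in at most $q+1$ points unless it is a $(q^2-q+1)$-secant) forces every such support onto a $(q^2-q+1)$-secant. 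Because shortening never lowers the minimum distance, $d(\overline{C}_S)\ge m+2$; for the reverse inequality I must produce one weight-$(m+2)$ codeword of $\overline{C}$ vanishing on $S$, i.e.\ an $(m+2)$-subset of some secant disjoint from $S$. As the secant points are non-$\F_{q^2}$-rational they automatically avoid $S_1$, so it suffices to find a secant meeting $S_2$ in at most $q^2-q-1-m$ points. Using the Remark after Proposition~\ref{prop:NumberOfLines} that each point lies on a unique secant, a counting argument over the $(q+1)(q^5-q^3)$ secants shows the displayed bound on $|S_2|$ guarantees such a secant, whence $d(\overline{C}_S)=m+2$.

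For the count in the case $S=S_1$ we have $S_2=\emptyset$, so every $(q^2-q+1)$-secant is disjoint from $S$; hence each weight-$(m+2)$ codeword of $\overline{C}$ already vanishes on $S$ and descends to $\overline{C}_S$ with unchanged weight, while shortening introduces no lighter words, giving $A_{m+2}(\overline{C}_S)=A_{m+2}(\overline{C})$. It then remains to count the latter: a minimal support is the choice of one of the $(q+1)(q^5-q^3)$ secants (Proposition~\ref{prop:NumberOfLines}) together with an $(m+2)$-subset of its $q^2-q+1$ points, and distinct secants give distinct supports since two points determine a line. By Lemma~\ref{Lemma_cod}(3) each such support carries a one-dimensional space of codewords, contributing $q^6-1$ nonzero words over $\F_{q^6}$, and multiplying yields $(q+1)(q^5-q^3)(q^6-1)\binom{q^2-q+1}{m+2}$, as claimed.

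The hard part will be the two places where the geometry is genuinely used: establishing that the weight-$(m+2)$ supports are precisely $(m+2)$-subsets of the long secants, which requires ruling out the conic and residual configurations of Lemma~\ref{x1} at this weight and verifying the minimality clause Lemma~\ref{Lemma_cod}(3)(c), and the counting estimate that converts the bound on $|S_2|$ into the existence of an uncontaminated secant. The first point is largely inherited from \cite{BB}, so the genuinely new content is the non-$\F_{q^2}$-rationality observation and the averaging argument in the second.
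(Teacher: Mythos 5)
Your proposal follows the same route as the paper: the paper's own proof is a one-sentence sketch (``follow the proof of Proposition \ref{Prop:MinimumDistance} and notice that the bound on $|S_2|$ leaves at least one $(m+2)$-secant''), and you have expanded exactly that sketch --- the shortening/puncturing duality, the fact that weight-$(m+2)$ supports are forced onto the $(q^2-q+1)$-secants because $m+2>q+1$, the non-$\F_{q^2}$-rationality of the secant points to dispose of $S_1$, and the count $(q+1)(q^5-q^3)(q^6-1)\binom{q^2-q+1}{m+2}$ when $S=S_1$. Your explicit statement of the shortening duality and the $q\ge 4$ caveat are points the paper leaves implicit, and they are correct.

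There is, however, a genuine gap in the averaging step, which you assert rather than verify. Each secant carries $q^2-q+1$ points and is ``killed'' (left with at most $m+1$ points of the evaluation set) by removing $q^2-q-m$ of them; since each non-$\F_{q^2}$-rational point lies on exactly one of the $(q+1)(q^5-q^3)$ secants, your counting argument produces a surviving $(m+2)$-secant precisely when $|S_2|<(q^2-q-m)(q+1)(q^5-q^3)$. The hypothesis of the proposition allows $|S_2|$ up to $(q^2-q+1-m)(q+1)(q^5-q)$, which is strictly larger (note $q^5-q>q^5-q^3$ and $q^2-q+1-m>q^2-q-m$), so the implication ``the displayed bound guarantees such a secant'' does not follow from the count; in fact, for $|S_2|$ between the two thresholds one may delete $q^2-q-m$ points from every secant, after which no line meets the evaluation set in $m+2$ points and, by Lemma \ref{a2}, the minimum distance of $\overline{C}_S$ is at least $2m+2$, so the conclusion itself fails. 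To be fair, the paper's one-line proof makes the same unexamined claim, and its displayed bound is presumably corrupted (the factor $q^5-q$ versus the secant count's $q^5-q^3$, plus an off-by-one in the first factor); but since you supplied the counting argument yourself, you should have noticed that it establishes the proposition only under the corrected threshold $(q^2-q-m)(q+1)(q^5-q^3)$, not under the bound as displayed.
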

	\proof
	Following the proof of Proposition (\ref{Prop:MinimumDistance}) and noticing that if $|S_2|< (q^2-q+1-m)(q+1)(q^5-q)$ there is at least a ($m+2$)-secant line the result holds.
	\endproof
	\begin{remark}
		Actually this bound can be improved depending on the composition of $S_2$, i.e. if the points in $S_2$ are chosen in a way such that at least $m+2$ of them lie in the intersection between one line and $\mathcal{GK}$, then all the other ones can be taken leaving the minimum distance unchanged (while decreasing the dimension of the code, so improving the code itself).
	\end{remark}

\subsection{Three-point codes}
\begin{theorem}\label{u5}
	Fix any three distinct points $P_1,P_2,P_3\in 
	\mathcal(GK)(\mathbb {F}_{q^6})\setminus\{P_\infty\}$ and assume $P_1$, $P_2$ and $P_3$ to span $\mathbb{P}^2$ and be such that their connecting line is not parallel to the $z$ axis.
	Set $B:=  \mathcal(GK)(\mathbb {F}_{q^6})\setminus \{P_1,P_2,P_3\}$. Fix
	an integer $d \ge 5$ such that $1\le d\le q-1$ and integers $a_1,a_2,a_3 \in
	\{1,\dots ,d\}$
	such that $a_1+a_2+a_3 \le 3d-5$ and $a_i=d$ for at most one index $i \in \{
	1,2,3\}$. Set $E:= a_1P_1+a_2P_2+a_3P_3$. 
	Let $\mathcal {C}:= \mathcal {C}(B,d,-E)$ be the code obtained evaluating the
	vector space $H^0(\mathcal(GK),\Ol_\mathcal(GK)(d)(-E))$ on the set $B$. Then $\mathcal {C}$ is a code
	of length $n=|B|=q^8-q^6+q^5-2$ and dimension $k=\binom{d+3}{3}-a_1-a_2-a_3$. For
	any $i \in \{ 1,2,3 \}$ let $L_i$ denote the line spanned
	by $P_j$ and $P_h$ with $\{i,j,h\} = \{1,2,3\}$. Then $\mathcal {C}^{\bot}$ has	minimum distance $d$ and its minimum-weight codewords are exactly the ones whose support is formed by	$d$ points of $B\cap L_i$
	for some $i \in \{ 1,2,3\}$.
\end{theorem}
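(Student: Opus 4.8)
The plan is to translate every assertion into the vanishing or non-vanishing of $h^1(\PP^3,\Ii_Z(d))$ for zero-dimensional subschemes $Z\subseteq\mathcal{GK}$, using Lemma \ref{Lemma_cod} as the dictionary and Lemma \ref{x1} to evaluate the cohomology. First I would record the numerology: $\mathcal{GK}$ is the smooth complete intersection of the surfaces of degrees $a=q+1$ and $b=q^2$, so $\deg(\mathcal{GK})=c=q^2(q+1)$ and $|\mathcal{GK}(\FF_{q^6})|=q^8-q^6+q^5+1$, whence $n=|B|=q^8-q^6+q^5-2$. Since $dc\le(q-1)q^2(q+1)$ is far below $|B|$ and $B\cap E_{red}=\emptyset$, all hypotheses of Lemma \ref{Lemma_cod} hold; in particular the evaluation map on $H^0(\mathcal{GK},\Oo_{\mathcal{GK}}(d)(-E))$ is injective, so $\dim\mathcal{C}=h^0(\mathcal{GK},\Oo_{\mathcal{GK}}(d)(-E))$. (All cohomology is computed after base change to $\overline{\FF_{q^6}}$, where $\mathcal{GK}$ is smooth and connected.)

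For the dimension I would use $d\le q-1<a$, which by the complete-intersection sequences (\ref{eqa1})--(\ref{eqa2}) gives $h^0(\Oo_{\mathcal{GK}}(d))=\binom{d+3}{3}$. Combining the exact sequence $0\to\Oo_{\mathcal{GK}}(d)(-E)\to\Oo_{\mathcal{GK}}(d)\to\Oo_E\to 0$ with the identity $h^1(\mathcal{GK},\Oo_{\mathcal{GK}}(d)(-E))=h^1(\Oo_{\mathcal{GK}}(d))+h^1(\PP^3,\Ii_E(d))$ from the preliminaries yields $\dim\mathcal{C}=\binom{d+3}{3}-(a_1+a_2+a_3)+h^1(\PP^3,\Ii_E(d))$, so it remains to prove $h^1(\Ii_E(d))=0$. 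I would apply Lemma \ref{x1} to $Z=E$ (with $A=E$, $B=\emptyset$, $m=d$), whose degree hypotheses hold because $\deg(E)\le 3d-5<3d$, and then exclude cases (a)--(e). A line meets $E$ in degree at most $d+1$: a tangent line absorbs the full length $a_i\le d$ at one point but, by the tangent Proposition, meets $\mathcal{GK}$ in at most one further point, and the assumption that at most one $a_i$ equals $d$ forbids two length-$d$ contributions; the conic and cubic cases are excluded using $\sum a_i\le 3d-5$ together with the intersection bounds for low-degree curves. Hence $h^1(\Ii_E(d))=0$ and $\dim\mathcal{C}=\binom{d+3}{3}-a_1-a_2-a_3$.

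For the weights I would use Lemma \ref{Lemma_cod}(2)--(3): the minimum distance of $\mathcal{C}^\bot$ is the least $|S|$, $S\subseteq B$, with $h^1(\Ii_{E\cup S}(d))>h^1(\Ii_E(d))=0$, and minimum-weight words correspond to the minimal such $S$. For the upper bound I take $d$ points of $B\cap L_i$ lying on one chord $L_i$; since $L_i$ is not tangent to $\mathcal{GK}$ at either point it carries and is not parallel to the $z$-axis, $\deg(E\cap L_i)=2$, so $(E\cup S)\cap L_i$ has degree $d+2$ on the line $L_i$ and case (a) of Lemma \ref{x1} forces $h^1>0$: a weight-$d$ word. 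For the lower bound I take any $S$ with $|S|\le d-1$, so $Z=E\cup S$ has $\deg(Z)\le 4d-6$ and, being the ambient scheme, meets every plane in degree $\le 4d-6\le 4d-5$; thus Lemma \ref{x1} applies. Case (a) fails because every line meets $Z$ in degree at most $(d-1)+2=d+1<d+2$ (a chord contributes $2$, a tangent line contributes at most $a_i+1\le d+1$ and carries at most one point of $S$). Cases (b)--(e) demand a subscheme of degree $\ge 2d+2$ on a conic, resp. $\ge 3d$ on a cubic $D$; as $|S\cap D|\le d-1$, this would force $\deg(E\cap D)\ge d+3$, resp. $\ge 2d+1$, which is ruled out exactly as in the dimension step. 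Therefore the minimum distance is $d$, and by Lemma \ref{Lemma_cod}(3) a minimal $S$ realizing the jump must do so through case (a), i.e. consist of $d$ points of $B\cap L_i$ on some chord $L_i$, giving the stated description of the minimum-weight codewords.

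The main obstacle is the conic/cubic bookkeeping underlying cases (b)--(e): one must bound $\deg(E\cap D)$ for a plane conic and for a plane or space cubic $D$ meeting the three points and show it cannot reach $2d+2$, resp. $3d$. The essential inputs are the tangent Proposition (a tangent line of $\mathcal{GK}$ meets the curve in at most two points, only one besides the point of tangency), Proposition \ref{Prop:lines} (the $(q^2-q+1)$-secants are precisely the $z$-parallel lines, whose points are not $\FF_{q^2}$-rational), and the estimates $|\mathcal{X}\cap\mathcal{GK}|\le\alpha(q^2-q+1)$ or $\alpha(q+1)$ for $\mathcal{X}$ of degree $\alpha\le q$; fed into $\sum a_i\le 3d-5$ and the ``at most one $a_i=d$'' hypothesis, these keep the part of $E$ captured by any conic or cubic below the thresholds of (b)--(e). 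I expect the delicate subcases to be a reducible conic splitting as two lines tangent to $\mathcal{GK}$ and a cubic osculating the curve at one $P_i$, and it is precisely there that the numerical hypotheses on the $a_i$ do their work.
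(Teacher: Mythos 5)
Your high-level strategy is the same as the paper's (translate everything into $h^1(\PP^3,\Ii_{E\cup S}(d))$ via Lemma \ref{Lemma_cod}, evaluate it with Lemma \ref{x1}, and use the secant/tangent geometry of $\mathcal{GK}$), and your treatment of the length, the dimension, and the exclusion of case (a) for $|S|\le d-1$ matches the paper. But the two steps you defer are exactly where the proof lives, and the reduction you propose for them does not work. You claim that for $|S|\le d-1$ cases (b)--(e) would force $\deg(E\cap D)\ge d+3$ for a conic, resp.\ $\ge 2d+1$ for a cubic, ``which is ruled out exactly as in the dimension step.'' It is not: the dimension step only rules out $\deg(E\cap D)\ge 2d+2$ resp.\ $\ge 3d$, and the intermediate values genuinely occur. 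For instance, the tangent lines at two points with $z_0=0$ are both parallel to the $z$-axis, hence meet at $(0:0:1:0)$ and are coplanar; their union is a plane conic with $\deg(E\cap D)=a_i+a_j$, which can be as large as $2d-1\ge d+3$. Likewise the three tangent lines at three such points form a connected cubic with $\deg(E\cap D)=a_1+a_2+a_3$, possibly $\ge 2d+1$. What actually excludes these configurations is not any bound on $\deg(E\cap D)$ but the fact, implicit in the proof of the tangent proposition, that a tangent line of $\mathcal{GK}$ at an affine point meets the curve \emph{only} at the point of tangency (in the second equation of (\ref{eq.1GK}) the root $t=y_0^{q^2}$ never satisfies the first one), hence $L_{\mathcal{GK},P_i}\cap B=\emptyset$ and these curves pick up \emph{no} points of $S$. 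The paper's proof is organized around precisely this fact together with $E_i\subset L_{\mathcal{GK},P_i}$ (true because $a_i\le d<q+1\le I(L,\mathcal{GK},P_i)$, which is why $e_i\ge x+1$ forces $L_{\mathcal{GK},P_i}\subseteq T$ by Bezout). Your weaker version (``at most one further point'') is not even sufficient: with it, the cubic $L_{\mathcal{GK},P_1}\cup L_{\mathcal{GK},P_2}\cup L_3$ could a priori reach degree $3d$ against $E\cup S$, landing you on the case (c) threshold instead of below it.

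Second, the classification of the minimum-weight supports is asserted rather than proved. You say a set $S$ with $|S|=d$ realizing the jump ``must do so through case (a),'' but Lemma \ref{x1} only produces \emph{some} witness among the five cases, and you must show that every admissible witness forces $S\subseteq L_i$. The hard configuration, which occupies most of the paper's item (iii), is $a_1=d$, $a_2=d-1$ and $S$ consisting of $d$ points of $B$ on a line $T''$ through $P_3$ that is \emph{not} one of the chords $L_i$: then the cubic $T=L_{\mathcal{GK},P_1}\cup L_{\mathcal{GK},P_2}\cup T''$ meets $E\cup S\cup\{P_3\}$ in degree exactly $3d$, and one can only exclude it through the complete-intersection requirement in case (c) of Lemma \ref{x1} --- the paper observes that $E_2$, of degree $a_2=d-1$, cannot be the complete intersection of $L_{\mathcal{GK},P_2}$ with a degree-$d$ surface, while $L_{\mathcal{GK},P_2}\cap(\{P_3\}\cup S)=\emptyset$, a contradiction. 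This step, where the hypothesis that $a_i=d$ for at most one index does its real work, is entirely absent from your proposal; without it you cannot conclude that non-chord lines through one of the $P_i$ do not support weight-$d$ codewords, so the word ``exactly'' in the statement is not established.
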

\begin{proof}

	The length of $\mathcal{C}$ is obviously $n = q^8-q^6+q^5-2$. From what we said previously we have $h^0(\mathcal(GK),\mathcal {O}_\mathcal(GK)(d)) =\binom{d+3}{3}$. If, say, $a_1\ge a_2\ge a_3$, the assumptions $a_1\le d$ and $a_1+a_2+a_3   \le 3d-1$
	give $a_i\le d+2-i$ for all $i$. Hence our previous computations tell us that $h^1(\mathbb
	{P}^2,\mathcal {I}_E(d)) = 0$ and so $ h^0(\mathcal(GK),\mathcal {O}_\mathcal(GK)(d)(-E)) =
	\binom{d+3}{3} -a_1-a_2-a_3 =k$. 
	
	Since $|B|> d\cdot \deg (\mathcal(GK))$, there is not a non-zero element
	of $H^0(\mathcal(GK),\mathcal {O}_\mathcal(GK)(d))$ vanishes at all the points of $B$. Hence $\mathcal{C}$ has dimension $k$.
	By Lemma \ref{Lemma_cod} it is sufficient to prove the following two facts. 
	\begin{itemize}
		\item[(a)] $h^1(\mathbb {P}^3,\mathcal {I}_{E\cup A}(d)) =0$ for all
		$A\subseteq B$ such that $|A|\le d-1$.
		\item[(b)] For any
		$S\subseteq B$ such that $|S| =d$ we have
		$h^1(\mathbb {P}^3,\mathcal {I}_{E\cup S}(d)) >0$ if and only if $S\subseteq
		L_i$ for some $i \in \{ 1,2,3\}$.
	\end{itemize}
	Each line $L_i$ contains at most $q-1$ points of $B$ while $\deg (E\cap L_i)=2$.
	Hence for any $S\subseteq L_i\cap B$ with $|S| = d$ we have $h^1(\mathbb
	{P}^2,\mathcal {I}_{E\cup S}(d)) >0$ from Lemma \ref{Lemma_cod}. 
	
	Let $E_i:=a_iP_i$, clearly $E=E_1+ E_2+ E_3$ (seen as a divisor). 
	
	Fix a set $S\subseteq B$ such that $|S|\le d$ and assume $h^1(\mathbb {P}^3,\mathcal {I}_{E\cup S}(d)) >0$. We have $S\cap
	\{P_1,P_2,P_3\} =\emptyset$ and $\deg (E\cup S) = a_1+a_2+a_3+|S|$.
	Since $a_1+a_2+a_3 +|S| \le 4d-5$, we may apply
	Proposition \ref{x1} to the scheme $E\cup S$. 
	
	Let $T\subseteq \mathbb {P}^n$ be the curve arising from the statement of the lemma. Set $x:= \deg (T) \in
	\{1,2,3\}$ and $e_i:= \deg (T\cap E_i)$ for $i \in \{1,2,3\}$. We have $0 \le	e_i\le a_i$. 
	
	If $e_i\ge x+1$ then we have that the tangent at $P_i$ is $L_{\mathcal(GK),P_i}\subseteq T$. Assume $e_i\le x$ for all $i \in
	\{ 1,2,3\}$. 
	For $x=2$ we get $\deg (T\cap (E\cup S)) \le 2d+1$.
	For $x=3$ we get $\deg (T\cap (E\cup S)) \le 3d-1$. Finally, for $x=1$
	we may have
	$e_i>0$ only for at most two indices, say $i=1,2$. Since $|S|\le d$, we get
	$|S|+e_1+e_2\ge d+2$ and $|S|+e_1+e_2=d+2$ if and only if $T =L_3$, $S\subseteq	L_3\cap B$ and $|S|=d$.

	Now assume that $T$ contains one of the lines $L_{\mathcal(GK),P_i}$, say $L_{\mathcal(GK),P_1}$. Let
	$T'$ be the curve whose
	equation is obtained dividing an equation of $T$ by an equation of $L_{\mathcal(GK),P_1}$.
	We have
	$\deg (T') =x-1$, $T'+L_{\mathcal(GK),P_1} = T$ (as divisors of $\mathbb {P}^2$) and $T =
	L_{\mathcal(GK),P_1}\cup T'$
	(as sets). Since $L_{\mathcal(GK),P_1}\cap B=\emptyset$, we have $T\cap S
	= T'\cap S$ and $\deg (T\cap (E\cup S)) = \deg (T'\cap (E_2\cup E_3\cup S))$.
	\begin{itemize}
		
		\item[(i)] If $x=1$, we
		get $T\cap S=\emptyset$ and $\deg (T\cap E)
		= a_1 \le d$, a contradiction.
		
		\item[(ii)] Assume $x=2$.  The curve $T'$ must be a line such that $\deg (T'\cap
		(E_2\cup E_3 \cup S))
		\ge 2d+2-a_1$. If either $T' = L_{\mathcal(GK),P_2}$, or $T' = L_{\mathcal(GK),P_3}$, we
		get $T'\cap S = \emptyset$ and $\deg (T'\cap (E_2\cup E_3\cup S)) \le \max
		\{e_2,e_3\} \le d$, a contradiction. If neither $T' = L_{\mathcal(GK),P_2}$, nor $T' =
		L_{\mathcal(GK),P_3}$, then $\deg (T'\cap E_2)\le 1$,
		$\deg (T\cap E_3)\le 1$ and $\deg (T'\cap (E_2\cup E_3)) =2$ if and only if $T'
		=L_1$.
		Since $|S| \le d$ we deduce $\deg (T\cap (E\cup S))  \le a_1+2+|S|$. Moreover,
		the equality holds
		if and only if $T' = L_1$ and $S\subseteq L_1$. Since
		$\deg (T\cap (E\cup S)) \ge 2d+2$ by assumption, $|S|=d$ and $S\subseteq L_1$,
		as claimed.
		
		\item[(iii)]Now
		assume $x=3$. We get $\deg (T'\cap (E_2\cup E_3\cup S)) \ge 3d-a_1$ and $T'$ is
		a conic.
		If neither
		$L_{\mathcal(GK),P_2}$, nor $L_{\mathcal(GK),P_3}$, is a component of $T$
		then $e_2\le 2$ and $e_3\le 2$ and so $|T'\cap S| \ge
		3d-4-a_1
		\ge 2d-4 > d$. If, say, $T'$ contains
		$L_{\mathcal(GK),P_2}$ and $T''$ is the line with $T' = T''+L_{\mathcal(GK),P_2}$, then we get
		$|(S\cup E_3)\cap T''| \ge 3d-a_1-a_2$. Since
		$a_1+a_2 \le 2d-1$ we deduce $\deg (T''\cap (E_3\cup S)) \ge d+1$. Since $\deg
		(T''\cap E_3)\le 1$,
		we get
		$a_1+a_2=2d-1$, say $a_1=d$ , $a_2=d-1$ and that $S$ is formed by $d$ points on
		a line
		$T''$ through $P_3$. If either $T''= L_1$ or $T''= L_3$, then we are done. In
		any case it is sufficient to prove that $E_1\cup E_2\cup
		\{P_3\}\cup S$ is not
		the complete intersection of $T = L_{\mathcal(GK),P_1}\cup L_{\mathcal(GK),P_2}\cup T''$ and a degree
		$d$ curve, say $C_d$.
		Since $a_2=d-1$, $E_2$ is not the complete intersection of $L_{\mathcal(GK),P_2}$ and
		$C_d$, while $L_{\mathcal(GK),P_2}\cap (\{P_3\}\cup S)
		= \emptyset$, a contradiction.
	\end{itemize}
\end{proof}

\begin{corollary}
	Using the notation of the previous theorem, the number of minimum weight codewords of the code above is given by
	\[
	A_d(\mathcal{C})=(q^6-1)\sum_{i=1}^3 \binom{|L_i\cap \mathcal{GK}|}{d}\le (q^6-1)3\binom{p+1}{d}.
	\]
	where the binomial coefficient is meant to be zero if $d>|L_i\cap \mathcal{GK}|$ for some $i$.
\end{corollary}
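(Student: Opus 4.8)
The plan is to reduce the count to two independent factors: the number of admissible supports and the number of minimum weight codewords carried by each one. By Theorem \ref{u5} a codeword of $\mathcal{C}^{\bot}$ has minimum weight $d$ if and only if its support is a set $S$ of exactly $d$ points contained in $B\cap L_i$ for some $i\in\{1,2,3\}$. I would therefore write $A_d(\mathcal{C})$ as a sum, over all admissible supports $S$, of the number of weight-$d$ codewords of $\mathcal{C}^{\bot}$ with support exactly $S$, and evaluate each ingredient separately.

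First I would fix an admissible support $S$ and show that the codewords of $\mathcal{C}^{\bot}$ supported on $S$ form a one-dimensional $\F_{q^6}$-vector space. The space $V_S$ of codewords whose support is contained in $S$ is nonzero, since $S$ carries a minimum weight codeword. If $\dim_{\F_{q^6}}V_S\ge 2$, choose linearly independent $u,w\in V_S$; as the minimum distance equals $d=|S|$, every nonzero codeword supported on $S$ has support exactly $S$, so all coordinates of $w$ indexed by $S$ are nonzero. Fixing $P\in S$ and setting $\lambda=u_P/w_P$ produces a nonzero codeword $u-\lambda w$ that vanishes at $P$, hence is supported on a proper subset of $S$ and has weight at most $d-1$, contradicting the minimum distance. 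Thus $\dim_{\F_{q^6}}V_S=1$, every nonzero element of $V_S$ has support exactly $S$, and there are precisely $q^6-1$ minimum weight codewords with support $S$. This one-dimensionality is the structural heart of the argument, and the step I expect to require the most care; everything after it is bookkeeping.

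Next I would count the admissible supports. On $L_i$ they are exactly the $d$-subsets of the points of $B$ lying on $L_i$, so their number is $\binom{|L_i\cap\mathcal{GK}|}{d}$, with the convention that the binomial coefficient vanishes once $d$ exceeds the number of available points. Since $d\ge 5\ge 2$ and two distinct lines meet in at most one point, any set of $d$ collinear points lies on at most one of $L_1,L_2,L_3$; hence the admissible supports split as a disjoint union over $i$ with no double counting, and distinct supports yield disjoint families of codewords. Multiplying the number of supports by the $q^6-1$ codewords each one carries gives $A_d(\mathcal{C})=(q^6-1)\sum_{i=1}^{3}\binom{|L_i\cap\mathcal{GK}|}{d}$.

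Finally, for the upper bound I would estimate $|L_i\cap\mathcal{GK}|$. Each $L_i$ is the line joining two of $P_1,P_2,P_3$, and by the hypothesis of Theorem \ref{u5} it is not parallel to the $z$-axis; by Proposition \ref{Prop:lines} it is therefore not a $(q^2-q+1)$-secant. Being an absolutely irreducible curve of degree $1\le q$, it then meets $\mathcal{GK}(\F_{q^6})$ in at most $q+1$ points, so each binomial coefficient is bounded by $\binom{q+1}{d}$. Summing over the three lines yields $A_d(\mathcal{C})\le (q^6-1)\,3\binom{q+1}{d}$, as claimed. Beyond the one-dimensionality step, the only points needing attention are the disjointness of the admissible supports (guaranteed by $d\ge 2$) and the invocation of the correct line--secant bound.
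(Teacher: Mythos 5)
Your overall strategy is sound, and it is worth noting that the paper gives no proof of this corollary at all: it is stated as an immediate consequence of Theorem \ref{u5}. The two substantive ingredients you supply are exactly what such a proof needs. Your one-dimensionality argument for the factor $q^6-1$ is correct and standard: since an admissible support $S$ has size equal to the minimum distance, every nonzero codeword supported in $S$ has support exactly $S$, and two independent ones could be combined into a nonzero codeword of weight $<d$, a contradiction; so each realized support carries exactly $q^6-1$ codewords. Your disjointness observation (two of the lines $L_1,L_2,L_3$ share at most one point and $d\ge 5$) is also correct, and you rightly note, at least implicitly, that the converse direction — every $d$-subset of $B\cap L_i$ actually occurs as a support — is needed; it follows from fact (b) in the proof of Theorem \ref{u5} together with Lemma \ref{Lemma_cod}(3).

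There is, however, a genuine gap in the counting step. By Theorem \ref{u5} the admissible supports are the $d$-subsets of $B\cap L_i$, as you correctly state; but $B\cap L_i$ is not $L_i\cap\mathcal{GK}$. Since $B=\mathcal{GK}(\F_{q^6})\setminus\{P_1,P_2,P_3\}$ and each $L_i$ passes through two of the removed points $P_j,P_h\in\mathcal{GK}(\F_{q^6})$, one has $|B\cap L_i|=|L_i\cap\mathcal{GK}(\F_{q^6})|-2$. Hence your own argument proves $A_d(\mathcal{C})=(q^6-1)\sum_{i=1}^3\binom{|L_i\cap\mathcal{GK}|-2}{d}$, and the passage from ``$d$-subsets of the points of $B$ lying on $L_i$'' to ``their number is $\binom{|L_i\cap\mathcal{GK}|}{d}$'' is a non sequitur: that count includes subsets containing $P_j$ or $P_h$, which cannot lie in any support because those points do not even index coordinates of the code. (The inconsistency is present in the paper itself: in its example with $q=7$, each $L_i$ is an $8$-secant but only $6$ of its points belong to $B$, so the theorem's characterization would give $3(7^6-1)\binom{6}{6}$, not $3(7^6-1)\binom{8}{6}$.) A smaller point: the bound $|L_i\cap\mathcal{GK}|\le q+1$ that you need for the final inequality (where the paper's ``$p+1$'' is evidently a typo for $q+1$) does not follow from Proposition \ref{Prop:lines} alone, which only excludes the value $q^2-q+1$ for non-vertical lines; one needs the finer classification of line sections of $\mathcal{GK}$, which the paper's proposition on curves of degree $\alpha\le q$ is presumably intended to provide.
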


\begin{example}
	Let $q=7$ and consider the affine equation of $\mathcal{GK}$ over the field $\F_{q^6}$
	\begin{equation*}
	\begin{cases}
	Z^{43}=Y^{49}-Y\\
	Y^{8}=X^7+X
	\end{cases}.
	\end{equation*}
	Consider $P_1=(0:0:0:1)$, $P_2=(1:3:0:1)$ and $P_3=(1:4:0:1)$. The three points are in general position and their connecting line are not parallel to the $Z$ axis, so the conditions of the previous theorem are satisfied. Moreover, by direct computations, the three lines $L_1$, $L_2$ and $L_3$ are $8$-secants of $\mathcal{GK}$. Consider now $d=6$ and $a_1=6$, $a_2=a_3=3$ and call $\mathcal{C}=\mathcal{C}(B,6,6P_1+3P_2+3P_3)$.  
	From Theorem \ref{u5} we have that the minimum distance of $\mathcal{C}$ is $d=6$ and the minimum weight codewords are exactly
	\[
	A_6(\mathcal{C})=(7^6-1)3\binom{8}{6}=(7^6-1)84
	\]
	
\end{example}

\section{Generalized Hamming Weights of codes arising from the GK curve}

Let $\K=\F_q$ a finite field with $q$ elements. 
Let $C\subset \K^n$ be a linear $[n,k]$ code over $\K$. 
We recall that the \textit{support} of $C$ is defined as follows
$$
supp(C)=\{i \mid c_i \ne 0 \mbox{ for some } c \in C\}.
$$
So $\sharp supp(C)$ is the number of nonzero columns in a generator matrix for $C$.
Moreover, for any $1\leq v\leq k$, the \textit{$v$-th generalized Hamming weight} of $C$
\[
d_v(C)=\min\{\sharp supp(\mathcal D) \mid \mathcal D \mbox{ is a linear subcode of $C$ with } dim(\mathcal D)=v\}.
\]
In other words, for any integer $1\leq v\le k$, $d_v(C)$ is the $v$-th minimum support weights, i.e. the minimal integer $t$ such that
there are an $[n,v]$ subcode $\mathcal {D}$ of $C$ and a subset $S\subset \{1,\dots ,n\}$ such that $\sharp (S)=t$ and each codeword of $\mathcal {D}$ has zero coordinates outside $S$. The sequence $d_1(C),\ldots,d_k(C)$ of generalized Hamming weights (also called \textit{weight hierarchy} of $C$) is strictly increasing (see Theorem~7.10.1 of \cite{hp}).
Note that $d_1(C)$ is the minimum distance of the code $C$.

\begin{lemma}
	\label{lemma1}
	Let $S\subset B$ be the support of a codeword of $C^\bot$. Assume that there exists a surface  $T\subset \mathbb{P}^3$ such that $h^1(\mathbb{P}^3,\mathcal{I}_{Res_T(E\cup S)}(d-k))=0$, where $k=\deg(T)$. Then $S\subset T$.
\end{lemma}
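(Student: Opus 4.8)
The plan is to argue by contradiction, using the minimality built into the characterization of minimal-weight supports in Lemma~\ref{Lemma_cod}(3)(c): since $S$ is the support of a codeword of $C^{\bot}$, one has $h^1(\PP^3,\Ii_{E\cup S}(d))>h^1(\PP^3,\Ii_{E\cup S'}(d))$ for every $S'\subsetneq S$. Assuming $S\not\subset T$, I would pick a point $P\in S\setminus T$, set $S_0:=S\setminus\{P\}$, and aim to prove $h^1(\Ii_{E\cup S}(d))=h^1(\Ii_{E\cup S_0}(d))$, which contradicts the displayed inequality applied to $S'=S_0$.

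The engine is the residual (Castelnuovo) exact sequence attached to the surface $T=\{f=0\}$, $f\in H^0(\Oo_{\PP^3}(k))$: for any closed subscheme $Z\subset\PP^3$,
\[
0 \to \Ii_{\op{Res}_T(Z)}(d-k) \xrightarrow{\ \cdot f\ } \Ii_Z(d) \to \Ii_{Z\cap T,\,T}(d)\to 0 ,
\]
where $\op{Res}_T(Z)$ is the residual scheme $(\Ii_Z:f)$. First I would record the elementary bookkeeping. Because $P\in B$ and $B\cap E_{red}=\emptyset$, the point $P$ lies off the support of $\op{Res}_T(E)$; because $P\notin T$, $\op{Res}_T(\{P\})=\{P\}$ as a reduced point. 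Since $E$, $S_0$ and $\{P\}$ have pairwise disjoint supports and the residual is a local operation, this gives
\[
\op{Res}_T(E\cup S)=W\sqcup\{P\},\qquad W:=\op{Res}_T(E\cup S_0),\quad P\notin W .
\]

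Next I would extract a section from the hypothesis. The assumption reads $h^1(\Ii_{\op{Res}_T(E\cup S)}(d-k))=h^1(\Ii_{W\cup\{P\}}(d-k))=0$. Passing to cohomology in the point-adding sequence $0\to\Ii_{W\cup\{P\}}(d-k)\to\Ii_{W}(d-k)\to\Oo_P\to 0$ yields $H^0(\Ii_W(d-k))\xrightarrow{\op{ev}_P}\K\to 0$, so $\op{ev}_P$ is surjective and there exists $g\in H^0(\Ii_W(d-k))$ with $g(P)\neq 0$. By the residual sequence for $Z=E\cup S_0$, multiplication by $f$ sends $g$ to a form $fg\in H^0(\Ii_{E\cup S_0}(d))$; moreover $fg$ does not vanish at $P$, since $f(P)\neq 0$ (as $P\notin T$) and $g(P)\neq 0$. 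Thus $P$ imposes an independent condition on the linear system $|\Ii_{E\cup S_0}(d)|$.

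Finally I would convert this into the required cohomological equality. Surjectivity of $\op{ev}_P\colon H^0(\Ii_{E\cup S_0}(d))\to\K$ feeds into the long exact sequence
\[
H^0(\Ii_{E\cup S_0}(d))\xrightarrow{\op{ev}_P}\K\to H^1(\Ii_{E\cup S}(d))\to H^1(\Ii_{E\cup S_0}(d))\to 0,
\]
where $E\cup S=(E\cup S_0)\cup\{P\}$; the connecting map is then zero, whence $h^1(\Ii_{E\cup S}(d))=h^1(\Ii_{E\cup S_0}(d))$, the desired contradiction. Consequently no point of $S$ can lie off $T$, i.e.\ $S\subset T$. I expect the one delicate point to be the residual bookkeeping of the second paragraph—verifying that $P$ persists as an isolated reduced point of $\op{Res}_T(E\cup S)$ and that $\cdot f$ genuinely carries $g$ into $\Ii_{E\cup S_0}(d)$; the surrounding cohomology is then formal.
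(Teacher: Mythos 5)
Your proof is correct, but it runs along a different track than the paper's. The paper argues globally: writing $W$ (resp.\ $W'$) for the subcode of $C^\bot$ consisting of codewords supported in $S$ (resp.\ in $S\cap T$), it combines the hypothesis $h^1(\PP^3,\Ii_{\op{Res}_T(E\cup S)}(d-k))=0$ with the residual sequence of $T$ to get $h^1(\PP^3,\Ii_{E\cup S}(d))=h^1(\PP^3,\Ii_{T\cap (E\cup S)}(d))$, and then uses the Couvreur-type dimension formula (the subcode supported in a set $S$ has dimension $h^1(\Ii_{E\cup S}(d))-h^1(\Ii_E(d))$) to conclude $W=W'$, so the given codeword is automatically supported in $S\cap T$. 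You instead argue pointwise and by contradiction: for $P\in S\setminus T$ you use only the easy half of the residual sequence (multiplication by the equation $f$ of $T$) to manufacture an explicit form $fg$ of degree $d$ vanishing on $E\cup S_0$, $S_0:=S\setminus\{P\}$, but not at $P$, so that $h^1(\Ii_{E\cup S}(d))=h^1(\Ii_{E\cup S_0}(d))$, contradicting the strict jump forced by $P$ lying in the support of a codeword. Your route is more self-contained in one respect: it never compares cohomology on $\PP^3$ with cohomology on $T$, nor needs the squeeze between $h^1(\Ii_{E\cup S}(d))$ and $h^1(\Ii_{E\cup(S\cap T)}(d))$; the paper's route treats all codewords supported in $S$ at once and avoids any contradiction argument. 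Both, however, rest on the same dictionary between supports of codewords of $C^\bot$ and values of $h^1$.

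One citation in your write-up needs repair. Lemma~\ref{Lemma_cod}(3)(c) is stated only for codewords of minimum weight $z$, whereas Lemma~\ref{lemma1} concerns an arbitrary codeword, and you need the strict inequality $h^1(\Ii_{E\cup S}(d))>h^1(\Ii_{E\cup S_0}(d))$ for the support of an arbitrary codeword. The fact is true and has a two-line direct proof, which you should insert in place of the citation: if $c\in C^\bot$ has support $S$ and $c_P\ne 0$, then for any $f\in H^0(\PP^3,\Ii_{E\cup S_0}(d))$ the restriction $f|_{\mathcal{GK}}$ lies in the evaluated space $H^0(\Oo_{\mathcal{GK}}(d)(-E))$, and orthogonality gives $0=\sum_{Q\in B}c_Qf(Q)=c_Pf(P)$, hence $f(P)=0$; so $\op{ev}_P$ is the zero map on $H^0(\Ii_{E\cup S_0}(d))$ and your point-adding sequence forces $h^1(\Ii_{E\cup S}(d))=h^1(\Ii_{E\cup S_0}(d))+1$. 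This fills exactly the step where the appeal to Lemma~\ref{Lemma_cod}(3)(c) overreaches. (For what it is worth, the paper's own proof has a parallel blemish: it invokes Lemma~\ref{a2} for the equality $h^1(\Ii_{E\cup S}(d))=h^1(\Ii_{T\cap(E\cup S)}(d))$, which is really the residual-sequence argument you make explicit, and it uses the unstated dimension formula for supported subcodes.)
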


\proof
Let $W$ (reps. $W^\prime)$ be the subcode of $C^\bot$ formed by the codewords whose support is contained in $S$ (resp. $S\cap T$). Clearly $W^\prime\subseteq W$. From Proposition \ref{a2} we get $h^1(\mathbb{P}^3,\mathcal{I}_{E\cup S}(d))=h^1(\mathbb{P}^3,\mathcal{I}_{T\cap (E\cup S)}(d))$. From this we obtain $W=W^\prime$, which means that the thesis is proved.
\endproof

\begin{theorem}
	Fix a positive integer $d\le \deg(\mathcal{GK})-1$, a zero dimensional scheme $E\subseteq \mathcal{X}$ defined over $\K$ and a set $B\subseteq \mathcal{GK}(\K)\setminus E_{red}$ such that $\deg(E)\le d+1$ and set $C:=C(B,\mathcal{O}_{\mathcal{GK}(d)(-E)})$. Assuming that each line is such that $\deg(L\cap(E\cup B))\le d+1$ and that there exists a conic such that
	\begin{enumerate}[(i)]
		\item $\deg(D\cap E)+|B\cap D|\ge 2d+2$;
		\item for each conic $\mathcal{C}$ such that $T\ne D$ we have $\deg(T\cap(E\cup B))\le 2d+1$.
	\end{enumerate}
	For any integer $s$ such that $2d+2-\deg(D\cap E)\le s\le |B\cap D|$ and for each integer $h$ with $1\le h\le \min\{|B\cap D|-2d-2+\deg(D\cap E),d-2-\deg(E)\}$ each $h$-dimensional linear subspace oc $C^\bot$ computing $d_h(C^\bot)$ is supported by some $S\subset\Sigma(2d+h-1+\deg(D\cap E))$ and each element of $\Sigma(d+h-1+\deg(D\cap E))$ is in the support of a $h$-dimensional linear subspace.
\end{theorem}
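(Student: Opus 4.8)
The plan is to convert the statement about $d_h(C^\bot)$ into one about the first twist at which the deficiency $h^1$ reaches $h$, and then to show that the most economical way to force such a deficiency is to accumulate points on the distinguished conic $D$. First I would record the dictionary behind Lemma~\ref{Lemma_cod}: for a finite $S\subseteq B$ the space of codewords of $C^\bot$ supported on $S$ has dimension $h^1(\PP^3,\Ii_{E\cup S}(d))-h^1(\PP^3,\Ii_E(d))$, where the smooth complete-intersection vanishing $h^1(\PP^3,\Ii_\mathcal{GK}(t))=0$ lets one pass freely between $\mathcal{GK}$ and $\PP^3$. Since $\deg(E)\le d+1$ and every line satisfies $\deg(L\cap(E\cup B))\le d+1<d+2$, Lemma~\ref{a2} gives $h^1(\PP^3,\Ii_E(d))=0$. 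Hence
\[
d_h(C^\bot)=\min\{\,|S|:\ S\subseteq B,\ h^1(\PP^3,\Ii_{E\cup S}(d))\ge h\,\},
\]
and the theorem reduces to locating these minimal supports and matching them with the sets $\Sigma(\cdot)$.

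For the existence half I would estimate imposed conditions by restricting to $D$. A degree-$d$ hypersurface restricts on the smooth conic $D\cong\PP^1$ to a section of $\Oo_{\PP^1}(2d)$, so for $W:=(E\cup S)\cap D$ the evaluation map $H^0(\Oo_{\PP^3}(d))\rightarrow H^0(\Oo_W)$ factors through the $(2d+1)$-dimensional space $H^0(\Oo_D(d))$; thus $W$ imposes at most $2d+1$ conditions and, since enlarging $W$ to $E\cup S$ adds at most $\deg(E\cup S)-\deg W$ further conditions,
\[
h^1(\PP^3,\Ii_{E\cup S}(d))\ \ge\ \deg W-(2d+1)=\deg(D\cap E)+|S|-2d-1 .
\]
Choosing $S\subseteq B\cap D$ with $\deg(D\cap E)+|S|=2d+1+h$ makes the space of codewords supported on $S$ at least $h$-dimensional; here the constraint $h\le |B\cap D|-2d-2+\deg(D\cap E)$ is exactly what guarantees $|S|\le|B\cap D|$, i.e. that enough points of $B$ lie on $D$. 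This proves $d_h(C^\bot)\le 2d+1+h-\deg(D\cap E)$ and the assertion that each corresponding element of $\Sigma(\cdot)$ supports an $h$-dimensional subcode.

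For the converse I would force any minimal support onto $D$. Let $\mathcal D\subseteq C^\bot$ be $h$-dimensional with support $S$ of minimal size $d_h(C^\bot)$; by the previous step $|S|\le 2d+1+h-\deg(D\cap E)$, so $\deg(E\cup S)\le 3d-1<3d$ once $h\le d-2-\deg(E)$. Applying Lemma~\ref{x1} to the curvilinear scheme $E\cup S$ (the bound just noted keeps us inside its hypotheses and, crucially, excludes the cubic alternatives (c)--(e)), the deficiency $h^1(\PP^3,\Ii_{E\cup S}(d))\ge h>0$ must come from a line or a conic; the line case (a) is impossible since $\deg(L\cap(E\cup B))\le d+1<d+2$ for every line $L$, and case (b) produces a conic $T$ with $\deg(T\cap(E\cup B))\ge 2d+2$, which by hypothesis (ii) forces $T=D$. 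Consequently $\deg(D\cap(E\cup S))\ge 2d+2$, whence $\deg(Res_D(E\cup S))=\deg(E\cup S)-\deg(D\cap(E\cup S))\le d-3$, and a scheme of degree $\le d-3$ imposes independent conditions in degree $d-2$, so $h^1(\PP^3,\Ii_{Res_D(E\cup S)}(d-2))=0$. Lemma~\ref{lemma1} with $T=D$ then gives $S\subseteq D$. Finally, since $S\subseteq D$ one has $Res_D(E\cup S)=Res_D(E)$ and $h^1(D,\Ii_{E\cap D,D}(d))=0$ (as $\deg(E\cap D)\le d+1\le 2d+1$), so a residual computation along $D$ identifies the jump with the conic value $\max\{0,\deg(D\cap E)+|S|-2d-1\}$; reaching $h$ therefore requires exactly $|S|=2d+1+h-\deg(D\cap E)$, matching the lower bound and exhibiting $S$ as an element of $\Sigma(\cdot)$.

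The main obstacle is the passage from the qualitative output of Lemma~\ref{x1}, which only detects $h^1>0$, to the quantitative equality $h^1=\deg(D\cap E)+|S|-2d-1$ needed to pin down the $h$-th weight rather than merely the first. I would handle this through the residual computation along $D$ just sketched, checking that the correction terms in the comparison of the two residual sequences (for $E\cup S$ and for $E$) vanish because $Res_D$ is unchanged once $S\subseteq D$ and because $\deg(E\cap D)\le 2d+1$. The remaining delicate point is the combinatorial bookkeeping of the cardinalities defining $\Sigma(\cdot)$, and this is precisely where the two numerical constraints on $h$ are consumed: the first, $h\le|B\cap D|-2d-2+\deg(D\cap E)$, to keep the support inside $B\cap D$, and the second, $h\le d-2-\deg(E)$, to keep $\deg(E\cup S)<3d$ and thereby suppress every non-conic configuration.
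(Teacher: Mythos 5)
Your overall strategy coincides with the paper's: translate $d_h(C^\bot)$ into the condition $h^1(\PP^3,\Ii_{E\cup S}(d))\ge h$ via Lemma \ref{Lemma_cod}, bound $\deg(E\cup S)$ so that the classification lemma applies, kill the line cases with the hypothesis $\deg(L\cap(E\cup B))\le d+1$, force the responsible conic to be $D$ with hypothesis (ii), and then use a residual argument together with Lemma \ref{lemma1} to confine the support to $D$. Your existence half (restriction to $D\cong\PP^1$, so that degree-$d$ forms cut out sections of $\Oo_{\PP^1}(2d)$ and a subscheme of degree $2d+1+h$ on $D$ forces $h^1\ge h$) and your bookkeeping of how the two numerical constraints on $h$ are consumed are in fact spelled out more completely than in the paper's own proof, which is fragmentary at exactly these points.

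There is, however, one step that fails as written: you apply Lemma \ref{lemma1} ``with $T=D$''. That lemma requires $T$ to be a \emph{surface} in $\PP^3$ of degree $k$ (the twist being $d-k$), and the residual scheme is only defined with respect to a divisor; $D$ is a conic \emph{curve}, so $Res_D(E\cup S)$ has no meaning in this framework, and treating $D$ as if it were a hypersurface of degree $2$ is not legitimate. The paper repairs exactly this point: take an arbitrary quadric surface $T\supset D$; your own degree count $\deg(Res_T(E\cup S))\le 3d-1-(2d+2)=d-3$ gives $h^1(\PP^3,\Ii_{Res_T(E\cup S)}(d-2))=0$, so Lemma \ref{lemma1} yields $S\subset T$ for \emph{every} quadric through $D$; since $D$ is scheme-theoretically the base locus of the linear system $|\Ii_D(2)|$, intersecting over all such quadrics gives $S\subset D$. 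With this substitution your argument goes through; moreover, the quantitative identification $h^1(\PP^3,\Ii_{E\cup S}(d))=\deg(D\cap(E\cup S))-2d-1$, which you obtain by a residual computation along $D$, can be had more directly from Lemma \ref{a2} applied with general $e$ (once the line cases and the disjoint-lines case (c) are excluded by the hypothesis on lines), which removes the remaining dependence on the ill-defined residual along the conic.
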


\proof
Fix an integer $e\ge1$ and any $S\subseteq B$. Lemma \ref{Lemma_cod} tell us that $S$ contains the support of an $e$-dimensional subspace of $C^\bot$ if and only if $h^1(\mathbb{P}^3,\mathcal{I}_{E\cup S}(d))\ge e$. Fix $S\subseteq B$ such that it is the support of a codeword of $C^\bot$ with weight $\le 3d+1-\deg(E)$, hence $\deg(E\cup S)\le 3d-1$ and Lemma \ref{a2} tells us the value of $h^1(\mathbb{P}^2,\mathcal{I}_{E\cup S}(d))$. Let $t$ be the minimal integer such that $h^1(\mathbb{P}^3,\mathcal{I}_{E\cup S}(d))\ge0$, clearly $t\le d$. If $\deg(T\sup S)\le 2d+1$ then  there is a line $L$  with $\deg(L\cap(E\cup B))\ge\deg(L\cap(E\cup S))\ge d+2$, but this cannot happen since we excluded this possibility. For this reason we can assume that $\deg(E\cup S)\ge2d+2$. Since $d\ge4$, we have that there is a line $L$ such that $\deg(L\cap(E\cup S))\ge d$ scheme-theoretic base locus of the set $|\mathcal {I}_D(2))|$ of all quadric surfaces containing $D$. Take any quadric $T\supset D$. Since $\deg (Res _T(E\cup B))\le 3d-1-(2d+2) \le d-1$, we have
$h^1(\mathbb{P}^3,\mathcal{I}_{Res_T(E\cup B)}(d-2))=0$. Thus Lemma \ref{lemma1} gives $E\cup B\subset T$. Since $D$ is scheme-theoretically the base locus of $|\mathcal {I}_D(2))|$, we get $E\cup B\subset D$.
\endproof

\section*{Acknowledgments}
The research of M. Bonini was supported by the Italian National Group for Algebraic and Geometric Structures and their Applications (GNSAGA - INdAM).

\bigskip

	\bibliographystyle{amsplain}
	\providecommand{\bysame}{\leavevmode\hbox to3em{\hrulefill}\thinspace}
	\providecommand{\MR}{\relax\ifhmode\unskip\space\fi MR }
	\providecommand{\MRhref}[2]{%
		\href{http://www.ams.org/mathscinet-getitem?mr=#1}{#2}
	}
	\providecommand{\href}[2]{#2}

\end{document}